\documentclass[a4paper,USenglish,cleveref, autoref, thm-restate]{lipics-v2021}
\usepackage[ruled]{algorithm2e}
\usepackage{xspace}
\usepackage{bm}

\hideLIPIcs  

\graphicspath{{./graphics/}}

\title{Smallest Enclosing Disk Queries Using Farthest-Point Voronoi Diagrams}

\author{Kevin Buchin}{Department of Computer Science, TU Dortmund University, Germany \and \url{https://ae.cs.tu-dortmund.de/team/buchin/}}{}{https://orcid.org/0000-0002-3022-7877}{}%

\author{Mark Joachim Krallmann}{Department of Computer Science, TU Dortmund University, Germany}{}{https://orcid.org/0009-0003-8927-4082}{Funded by the Deutsche Forschungsgemeinschaft (DFG, German Research Foundation) - 550797858}

\author{Frank Staals}{Department of Information and Computing Sciences, Utrecht University, The Netherlands}{}{https://orcid.org/0009-0004-8522-1351}{}

\authorrunning{K.\,Buchin, M.\,J.\,Krallmann, and F. Staals} 

\Copyright{Kevin Buchin, Mark J. Krallmann, and Frank Staals} 

\ccsdesc[500]{Theory of computation~Computational geometry}
\ccsdesc[300]{Theory of computation~Design and analysis of algorithms}
\ccsdesc[300]{Theory of computation~Data structures design and analysis}

\keywords{Range searching, smallest enclosing disk, farthest point
  Voronoi diagram} 

\category{} 

\relatedversiondetails[cite=esaVersion]{ESA 2026 Conference Paper}{https://doi.org/10.4230/LIPIcs.ESA.2026.124}

\nolinenumbers 

\EventEditors{Philip Bille, Seth Pettie, and Sabine Storandt}
\EventNoEds{3}
\EventLongTitle{34th Annual European Symposium on Algorithms (ESA 2026)}
\EventShortTitle{ESA 2026}
\EventAcronym{ESA}
\EventYear{2026}
\EventDate{August 31--September 4, 2026}
\EventLocation{L'Aquila, Italy}
\EventLogo{}
\SeriesVolume{388}
\ArticleNo{122}
\DeclareMathOperator{\fpvdmth}{FPVD}
\DeclareMathOperator{\cellmth}{FPVC}
\DeclareMathOperator{\ch}{CH}
\DeclareMathOperator{\sedmth}{SED}
\DeclareMathOperator{\midmth}{mid}
\DeclareMathOperator{\fcases}{FC}
\DeclareMathOperator{\smd}{SMD}
\DeclareMathOperator{\md}{MD}
\DeclareMathOperator{\nextmth}{next}
\DeclareMathOperator{\prevmth}{prev}
\DeclareMathOperator*{\polylog}{polylog}
\newtheorem{assumption}[theorem]{Assumption}
\newtheorem{fact}[theorem]{Fact}

\newcommand{\ie}{i.\@e.\ }
\newcommand{\eg}{e.\@g.\ }
\newcommand{\Eg}{E.\@g.\ }
\newcommand {\mathset} [1] {\ensuremath {\mathbb {#1}}\xspace}
\newcommand {\R} {\mathset {R}}

\newcommand{\myremark}[4]{\textcolor{blue}{\textsc{#1 #2: }}\textcolor{#4}{\textsf{#3}}}
\renewcommand{\myremark}[4]{}

\begin{document}
\maketitle
\begin{abstract}
Let $S$ be a set of $n$ points in $\R^2$. 
Our goal is to preprocess $S$ to efficiently compute the smallest enclosing disk of the points in $S$ that lie inside an axis-aligned query rectangle. 
Previous data structures for this problem achieve a query time of $O(\log^6 n)$ with $O(n \log^2 n)$ preprocessing time and space by lifting the points to 3D, dualizing them into polyhedra, and searching through their intersections.
We present a significantly simpler approach, solely based on 2D geometric structures, specifically 2D farthest-point Voronoi diagrams.
Our approach achieves a deterministic query time of $O(\log^4 n)$ and, via randomization, an expected query time of $O(\log^{5/2} n \log\log n)$ with the same preprocessing bounds.
\end{abstract}
\section{Introduction}
Let $S$ be a set of $n$ points in $\R^2$. Computing the smallest
radius disk that contains all points in $S$, i.e. the \emph{smallest
  enclosing disk} $\sedmth(S)$ is a fundamental and well-studied
problem in computational geometry that has various applications in
facility location and
clustering~\cite{AgarwalP02,BadoiuHI02,trsc.6.4.379}. Megiddo gave a
linear-time algorithm to compute the smallest enclosing disk (or even
the smallest enclosing ball of points in
$\R^d$)~\cite{megiddoLinearTimeAlgorithmsLinear1983}. Furthermore,
Welzl's simple, randomized incremental construction
algorithm~\cite{welzlSmallestEnclosingDisks1991} now appears in
various text books on computational
geometry~\cite{debergComputationalGeometryAlgorithms2008}. However, in
many applications we are not interested in the smallest enclosing disk
of \emph{all} points in $S$, but only of the points in an area of
interest. For example, a data analyst may want to interactively select
some region $Q$, and quickly compute the smallest enclosing disk of
the points in $S \cap Q$. See \autoref{fig:example}. Hence, we want to
build a data structure that can answer such \emph{smallest enclosing
  disk queries} efficiently.

\begin{figure}
    \centering
    \includegraphics*[]{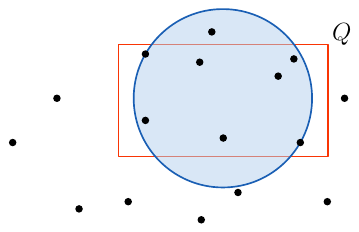}
    \caption{A rectangular query region $Q$ and the smallest enclosing disk of $S \cap Q$.}
    \label{fig:example}
\end{figure}

Our smallest enclosing disk queries are a form of \emph{range
  aggregation queries}: i.e. store the set $S$ so that one can
efficiently compute some aggregation function $f(S \cap Q)$ of the
points in a query range $Q$. Simple aggregation functions such as
counting or reporting the number of points in the query range are
extremely well studied~\cite{chazelle1990ArithmeticLowerBound,debergComputationalGeometryAlgorithms2008,nekrich2009data}. More recently, there is an interest
in more advanced aggregation functions. Examples include queries that
ask for (approximate) heavy hitters and
quantiles~\cite{afshani23range_summar_queries}, color frequency
reporting~\cite{glazenburg26stric_output_sensit_color_frequen_repor},
(approximate) smallest enclosing
disks~\cite{brassRangeAggregateQueriesGeometric2013,khareImprovedBoundsSmallest2014}
(reviewed in more detail below) and
balls~\cite{huangInRangeFarthestPoint2022}, or even queries that ask
for an entire (approximate) $k$-center, $k$-median, or $k$-means
clustering of
$S \cap
Q$~\cite{abrahamsenRangeClusteringQueries2017,ohApproximateRangeQueries2018a}. Many
of these approximate results rely on composable
coresets~\cite{har-peledCoresetsKmeansKmedian2004}; i.e. the ability
to compactly summarize a subset of the points, so that we can combine
the summaries to obtain a summary of
$S \cap Q$~\cite{nekrichApproximatingRangeaggregateQueries2010}.

\subparagraph{Smallest enclosing disk queries.} We will focus on
smallest enclosing disk queries in the case that the query ranges are
axis-aligned rectangles. There are $\Theta(n^4)$ combinatorially
distinct query ranges, so clearly we could precompute and store all
answers (taking $O(n^5)$ time $O(n^4)$ space) so that we can answer
queries in optimal $O(\log n)$ time. However, this is prohibitively
large. Alternatively, in (near) linear time and space one can build a
data structure to answer range reporting queries (e.g. a range tree or
kd-tree) to report the $k$ points in $S \cap Q$ and then compute
$\sedmth(S\cap Q)$ in additional $O(k)$ time. However, as the number
of points $k$ in the query range may be large, this query time
(e.g. $O(\log n + k)$) may not be sufficient for interactive
applications.

Unfortunately, computing the smallest enclosing disk is not
decomposable; i.e. one cannot easily compute $\sedmth(A \cup B)$ from
$\sedmth(A)$ and $\sedmth(B)$. Hence, one cannot directly use similar
ideas as for answering e.g. range counting queries. Instead, in
\cite{brassRangeAggregateQueriesGeometric2013} the authors argue that
point sets $A$ and $B$ can be lifted and dualized into convex
polyhedra $P_A$ and $P_B$, so that one can compute $\sedmth(A \cup B)$
by searching in (an implicit representation of) the intersection of
$P_A$ and $P_B$. Such a searching query is somewhat similar to
answering linear programming query, and thus, given $m$ convex
polyhedra of total complexity $n$, one can answer a query in
$O(m^3\log^3 n)$
time~\cite{eppsteinDynamicThreeDimensionalLinear1992}. By combining
this machinery with a standard 2D-range tree, one can obtain
$S \cap Q$ as $m=O(\log^2 n)$ appropriately preprocessed canonical
subsets, and thus answer a smallest enclosing disk query in
$O(\log^9 n)$ time. The data structure uses $O(n\log^2 n)$ space and
can be built in $O(n\log^2 n)$
time~\cite{brassRangeAggregateQueriesGeometric2013}. In
\cite{khareImprovedBoundsSmallest2014} the authors presented a
modified range tree that allows to reduce the number of considered
canonical sets to $O(\log n)$, thereby improving the query time to
$O(\log^6 n)$.

\subparagraph{Our Results.} We will show that we can instead decompose
the problem by using farthest-point Voronoi diagrams. The farthest
point Voronoi diagram $\fpvdmth(S)$ of $S$ is a subdivision of the
plane into interiorly disjoint maximal regions so that any point $q$
in a region $\cellmth(p, S)$ has the same site $p$ that is farthest
from $q$ among $S$. We show that given the farthest-point Voronoi diagrams
 of $A$ and $B$ (in an appropriately
preprocessed form), we can find the smallest enclosing disk
$\sedmth(A \cup B)$ in $O(\log^2 n)$ time (where $n=|A|+|B|$). Our
procedure, described in
\autoref{sec:decomp-search} extends to $m$
diagrams; and runs in $O(m^2 \log^2 n)$ time. This allows us to bypass
the whole lifting and dualizing to step, as well as the somewhat
intricate searching among the intersection of 3D convex polyhedra
steps used by the earlier
approaches~\cite{brassRangeAggregateQueriesGeometric2013,
  khareImprovedBoundsSmallest2014}. Hence, this significantly
simplifies the query procedure. Furthermore, it improves the query
time to $O(\log^4 n)$. The preprocessing step also remains relatively
simple. One may furthermore wonder whether one can use randomization
to further simplify the algorithm (as was the case in the ``base''
problem). Unfortunately, we show that one cannot straightforwardly
generalize Welzl's randomized incremental construction algorithm for
computing the smallest enclosing disk of points to computing the
smallest enclosing disk of disjoint convex polygons (corresponding to
our canonical subsets). See
\autoref{sec:Randomized_Approaches}. However, we can combine our
earliest farthest-point Voronoi diagram based machinery with a
randomized dynamic programming based framework from
Eppstein~\cite{eppsteinDynamicThreeDimensionalLinear1992} to further
speed up smallest enclosing disk queries. In particular, our data
structure answers such queries in expected $O(\log^{5/2} n\log\log n)$
time. The preprocessing time and space remain $O(n\log^2
n)$.

\section{Preliminaries}
\label{sec:Preliminaries}

In this section, we briefly review some of the basic tools and
techniques that we build upon throughout the paper.
Let $S$ be a set of $n$ points in $\R^2$.
We assume that the points in $S$ are in general position,
\ie no three points are collinear and no four points are cocircular.
If this assumption is violated, we use symbolic perturbation \cite{edelsbrunnerSimulationSimplicityTechnique1990}.
Let $\ch(S) \subseteq S$ denote the vertices of the convex hull of $S$ and $\midmth(p, q)$ denote the midpoint of two points $p, q \in S$.

\subsection{Range-aggregate queries for smallest enclosing disk}\label{subsec:sed-range-aggr}

Our goal is to preprocess and store $S$, such that we can efficiently
compute the smallest enclosing disk $\sedmth(S \cap Q)$ of an axis
aligned query rectangle
$Q = [x, x'] \times [y, y'] \subseteq \mathbb R^2$. Our goal is to
answer such queries in $O(\polylog n)$ time using $O(n\polylog n)$
space.

\begin{figure}
    \centering
    \includegraphics*[page=2]{khare.pdf}
    \caption{Based on~\cite{khareImprovedBoundsSmallest2014}. To compute $\sedmth(S \cap Q)$ we can discard the green-shaded nodes.
    }\label{fig:khare}
\end{figure}

We achieve this by building the modified 2-dimensional range tree of~\cite{khareImprovedBoundsSmallest2014} on $S$.
Compared to a standard range tree, this structure yields only $O(\log n)$ canonical nodes $v_1,  \dots, v_m$ to consider, instead of $O(\log^2 n)$.
The key observation, as described in~\cite{khareImprovedBoundsSmallest2014}, is that $\sedmth(S \cap Q)$ is determined solely by vertices of the convex hull.
All nodes that cannot contribute any point to the convex hull can be discarded, as shown in \autoref{fig:khare}.
We represent $S \cap Q$ with the corresponding canonical sets $S_1 = P(v_1), \dots, S_m = P(v_m)$ such that $\ch(S_1 \cup \dots \cup S_m) = \ch(S \cap Q)$.

We can partition $\ch(S \cap Q)$ into maximal sections contributed by one canonical set respectively.
Let $p_1, \dots, p_k$ be the points in $\ch(S)$ in clockwise order (beginning with the point that is smallest lexicographically), then we refer to the set of points $\{p_a, p_{a+1}, \dots, p_{b-1}, p_b\}$, wrapping around if needed, with $\ch(S)_{a:b}$.
We refer to $\ch(S)_{a:b}$ as a \emph{section} of $\ch(S)$.
We may include or exclude points relative to $p_a$, \eg by writing $\ch(S)_{a+1:b-1}$ we exclude $p_a$ and $p_b$.
Note that a section such as $\ch(S)_{a+1:b-1}$ may be empty if there are no points between $p_a$ and $p_b$ in the clockwise sequence.
Let $S_1, \dots, S_m$ be canonical sets.
We refer to a non-empty section $\ch(S_i)_{a:b}$ as a \emph{canonical section}
of $S_i$ when there is an equivalent section $\ch(S \cap Q)_{a':b'} = \ch(S_i)_{a:b}$ that is maximal, \ie $S_i$ does not include the clockwise and counter-clockwise neighbors, hence $\ch(S \cap Q)_{a'-1:b'} \nsubseteq S_i$ and $\ch(S \cap Q)_{a':b'+1} \nsubseteq S_i$.
\begin{figure}
    \centering
    \includegraphics{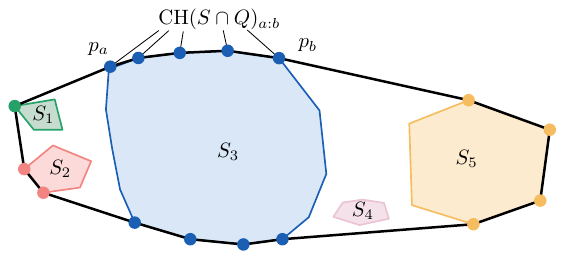}
    \caption{The convex hull consists of canonical sections, such as $\ch(S_1)_{a:b}$.}\label{fig:convex-hull}
\end{figure}

\autoref{fig:convex-hull} illustrates an example, where $\ch(S \cap Q)_{a:b} \subseteq \ch(S_3)$.
Observe that the set $S_3$ has two canonical sections, while the other sets have one, or none in case of $S_4$.
We can bound the total number of canonical sections.
\begin{restatable}{lemma}{canonicalSectionBound}\label{lemma:ch-bound}
    The convex hull of $S \cap Q$ consists of $O(m)$ canonical sections.
\end{restatable}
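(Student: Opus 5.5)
We need to bound the number of maximal runs of hull vertices of $S \cap Q$ that come from a single canonical set. My plan is to exploit the geometry of the canonical sets produced by the modified range tree of~\cite{khareImprovedBoundsSmallest2014}. These sets are the point sets of canonical nodes of a 2D range tree. Each such set lies in a vertical slab (from the primary tree) and, within that slab, in a horizontal strip restricted to $Q$ (from the secondary tree). So each $S_i$ is contained in an axis-aligned rectangle $R_i \subseteq Q$, and the rectangles $R_1,\dots,R_m$ are pairwise interior-disjoint.

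\textbf{Reduction to alternations.} First I will reduce the count to bounding alternations in the cyclic sequence. Record, for each vertex of $\ch(S\cap Q)$ in clockwise order, the index $i$ of the canonical set containing it. The number of canonical sections equals the number of maximal constant runs in this cyclic word. So it suffices to show the word has $O(m)$ runs.

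\textbf{Run bound from boundary geometry.} I would split the hull boundary into its four monotone chains between the extreme points in $x$ and $y$: upper-left, upper-right, lower-right and lower-left. Along, say, the upper-right chain, the vertices are monotone: $x$ increases while $y$ decreases.

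Now consider a set $S_i$ whose points appear in two non-consecutive runs on this chain, with a vertex of some $S_j$ in between. By monotonicity, the $S_j$ vertex lies in the axis-parallel box spanned by two points of $S_i$, and hence in $R_i$'s bounding box. Since the rectangles are disjoint, this is impossible. Hence on each monotone chain every set contributes at most one run. That gives at most $4m$ runs from the chains, plus $O(1)$ extra runs caused by splitting at the chain endpoints. Total: $O(m)$.

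\textbf{Main obstacle.} The delicate step is justifying that the canonical sets really sit in pairwise disjoint axis-parallel rectangles in the modified structure. In a standard range tree the canonical sets of different primary nodes occupy disjoint $x$-slabs, and within one slab the secondary canonical nodes occupy disjoint $y$-intervals, so I expect this to hold. But I must check it is preserved after the pruning in~\cite{khareImprovedBoundsSmallest2014}. Pruning only discards sets, so disjointness should survive.

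If instead the argument must use the weaker slab structure, I would try the following fallback. Argue that within a single $x$-slab the sets are ordered by $y$, so each appears at most once per chain. Then charge interleavings between slabs to slab boundaries, of which there are $O(m)$.
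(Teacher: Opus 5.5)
Your argument is correct, but it takes a different route from the paper. The paper uses only that the hulls $\ch(S_1),\dots,\ch(S_m)$ are pairwise disjoint convex polygons. Two disjoint convex polygons cannot appear in the pattern $P_i\ldots P_j\ldots P_i\ldots P_j$ along the boundary of the hull of the union. Hence the sequence of polygons met in a clockwise traversal, with runs collapsed, is a Davenport--Schinzel sequence of order $2$ on $m$ symbols, and its length is at most $2m-1$.

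You instead use the stronger, axis-parallel structure coming from the range tree, namely that the canonical sets have pairwise disjoint bounding boxes. On each of the four monotone chains, a hull vertex lying between two vertices $u,w\in S_i$ lies in the box spanned by $u$ and $w$. So every set forms a single run per chain, which gives at most $4m$ sections.

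Your version is more elementary: it needs no Davenport--Schinzel machinery, the contradiction is a one-line containment, and it makes explicit where the disjointness comes from, which the paper simply asserts. The paper's version needs a weaker hypothesis, disjoint hulls rather than disjoint axis-parallel boxes. It therefore applies to any family of pairwise disjoint convex polygons, however the canonical sets are produced, and it gives the sharper bound $2m-1$.

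For canonical subsets of a range-tree query your hypothesis does hold. Different primary nodes have disjoint $x$-ranges, secondary nodes of the same primary node have disjoint $y$-ranges, and discarding nodes preserves this. So the fallback you sketch is unnecessary.

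One small tightening: state the property as closed, pairwise disjoint bounding boxes, or note explicitly that the intermediate vertex lies strictly inside the box spanned by $u$ and $w$. Mere interior-disjointness gives nothing when some $R_j$ is degenerate, for example a single point or a segment.
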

\begin{proof}
    The convex hulls of the sets $S_1, \dots S_m$ can be considered disjoint convex polygons $P_1, \dots, P_m$.
    Traverse the convex hull of $P_1 \cup \dots \cup P_m$ once in clockwise order.
    Consider the resulting sequence of visited polygons, \eg $P_1, P_2, \dots P_1, P_6, P_8 \dots P_1$.
    Pick any two polygons $P_i, P_j$.
    The alternating subsequence $\dots P_i, \dots P_j, \dots
    P_i, \dots P_j$ would imply that $P_i$ and $P_j$ are not convex or not disjoint.
    Hence, the traversal cannot contain such subsequences and is a Davenport-Schinzel sequence of order 2 with $m$ distinct values.
    The sequence has maximum length $2m - 1$~\cite{sharirDavenportSchinzelSequencesTheir1995}, thus we have $O(m)$ canonical sections.
\end{proof}

To derive canonical sections we compute tangents connecting sections of $\ch(S \cap Q)$ in time $O(m^2 \log n)$, \ie time $O(\log^3 n)$ with $m = O(\log n)$ as shown in~\cite{brassRangeAggregateQueriesGeometric2013}.

During preprocessing, for every secondary range tree we compute the
convex hulls of canonical sets bottom up, by merging the convex hulls
of children in linear time. 
Thus, we require $O(n \log^2 n)$ preprocessing time storage for the complete tree.

\subsection{Farthest-point Voronoi diagrams}\label{subsec:fpvd}

A well-known data structure that has close connections to the smallest
enclosing disk is the \emph{farthest-point Voronoi diagram}
~\cite{debergComputationalGeometryAlgorithms2008}.  The
farthest-point Voronoi diagram assigns the farthest point of
$S$ to every point in the plane, \ie the point $p \in S$ such that
$d(q,p) = \max_{p' \in S} d(q, p')$, as illustrated by
\autoref{fig:fpv-graph}.  With $\fpvdmth(S)$ we refer to the
farthest-point Voronoi diagram of the set of points $S$.  The diagram
$\fpvdmth(S)$ is a subdivision of the plane into $O(|S|)$ cells, such
that the farthest point does not change in a cell.
Every point $p \in \ch(S)$ has exactly one non-empty cell $\cellmth(p, S) = \bigcap_{q \in S \setminus \{p\}} H(p, q)$ where $H(p, q) = \{x \in \mathbb R^2 \mid d(x, q) \le d(x, p)\}$ is the bisector-induced closed halfspace in which $p$ is at least as far.
Points in $S \setminus \ch(S)$ have no cell.
In the remainder of the paper, we consider only points $p \in \ch(S)$ that have a cell.

\begin{figure}
    \centering
    \includegraphics*[page=7]{fpv-example.pdf}
    \caption{A farthest-point Voronoi diagram consisting of vertices, such as $v$ and edges such as $e_1, \dots, e_4$. The
      edges $e_1$ and $e_4$ are half-infinite edges.
    }\label{fig:fpv-graph}
\end{figure}
A farthest-point Voronoi diagram can be interpreted as a tree of
linear complexity~\cite{debergComputationalGeometryAlgorithms2008} as
illustrated by \autoref{fig:fpv-graph}.
A vertex of $\fpvdmth(S)$ is a point $v \in \mathbb R^2$, such that there are three points
$p_1, p_2, p_3 \in S$ with $\{v\} = \cellmth(p_1, S) \cap \cellmth(p_2, S) \cap \cellmth(p_3, S)$.
We refer to $p_1, p_2, p_3$ as the \emph{defining} points of $v$.
An edge $e$ of $\fpvdmth(S)$ is a
(half-infinite) line segment, such that two points $p_1, p_2 \in S$
exist with $e = \cellmth(p_1, S) \cap \cellmth(p_2, S)$ and the
intersection is non-degenerate (\ie contains more than one point).  We
call $p_1$ and $p_2$ the \emph{defining} points of $e$.  Let
$e_1, \dots, e_m$ be the edges that bound $\cellmth(p, S)$ in
clockwise order. Let $p_i$ be the \emph{defining} point of $e_i$;
i.e. the point that together with $p$ defines $e_i$. We then use the
following key observations:

\begin{restatable}{observation}{obsFpvRays}\label{obs:fpvrays}
  Let $r$ be a ray starting at any point $s \in \cellmth(p,
  S)$ and pointing opposite to $p \in S$. The ray
  $r$ is fully contained in $\cellmth(p, S)$; i.e. $r \subseteq
  \cellmth(p, S)$.
\end{restatable}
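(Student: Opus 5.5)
Since $\cellmth(p,S) = \bigcap_{q \in S\setminus\{p\}} H(p,q)$, it suffices to show that the ray $r$ stays inside every closed halfplane $H(p,q)$ with $q \neq p$. So fix one such $q$ and work with a single bisector. We parametrize the ray as $r(t) = s + t\,u$ for $t \ge 0$, where $u = (s-p)/\|s-p\|$ is the unit vector pointing away from $p$. The degenerate case $s = p$ cannot occur, because $p \notin H(p,q)$ for $q \neq p$: indeed $d(p,p) = 0 < d(p,q)$. In any case the direction ``opposite to $p$'' is only defined when $s \neq p$.

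The key step is to rewrite membership in $H(p,q)$ as a linear inequality. Expanding squared distances,
\begin{equation*}
d(x,p)^2 - d(x,q)^2 = 2\langle x, q-p\rangle + \|p\|^2 - \|q\|^2 ,
\end{equation*}
so $x \in H(p,q)$ if and only if $f(x) := \langle x - \midmth(p,q),\, q-p\rangle \ge 0$. The function $f$ is affine along the ray:
\begin{equation*}
f(r(t)) = f(s) + t\,\langle u, q-p\rangle .
\end{equation*}
Because $f(s) \ge 0$ by assumption, it remains to show that $\langle u, q-p\rangle \ge 0$, or equivalently $\langle s-p, q-p\rangle \ge 0$.

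This sign condition is the main step. We get it from $s \in H(p,q)$, i.e.\ $\|s-q\|^2 \le \|s-p\|^2$. Expanding $\|s-q\|^2 = \|(s-p) - (q-p)\|^2$ gives
\begin{equation*}
\|s-p\|^2 - 2\langle s-p, q-p\rangle + \|q-p\|^2 \le \|s-p\|^2 ,
\end{equation*}
hence
\begin{equation*}
\langle s-p, q-p\rangle \ge \tfrac12 \|q-p\|^2 \ge 0 .
\end{equation*}
Therefore $f(r(t)) \ge 0$ for all $t \ge 0$, so $r \subseteq H(p,q)$. Since $q$ was arbitrary, $r \subseteq \cellmth(p,S)$. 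A more geometric way to say the same thing: moving directly away from $p$ increases $d(\cdot,p)$ at unit rate, while by the triangle inequality $d(\cdot,q)$ grows at rate at most one. So $d(\cdot,p) - d(\cdot,q)$ is nondecreasing along $r$ and stays $\ge 0$. This version can serve as a sanity check or as an alternative short proof.
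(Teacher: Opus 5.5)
Your proof is correct and takes essentially the same route as the paper: write the cell as an intersection of bisector halfplanes $H(p,q)$, fix one of them, and show that a ray leaving $s$ in the direction away from $p$ stays inside it. The only difference is the key step: the paper just asserts that such a ray cannot cross a bisector that already separates $p$ from $s$, while your inequality $\langle s-p, q-p\rangle \ge \frac{1}{2}\|q-p\|^2$ proves it, and your triangle-inequality argument is an equally valid alternative.
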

\begin{proof}
    By definition $\cellmth(p, S)$ is the intersection of halfspaces induced by bisectors.
    Take any of these bisectors and focus on the induced halfspaces.
    Then the halfspace $H$ where $p$ is further is the halfspace that does not contain $p$.
    By assumption $s \in H$ and hence the bisector separates $p$ and $s$.
    Since $r$ points opposite to $p$, it cannot cross the bisector and therefore remains in $H$.
\end{proof}

\autoref{obs:fpvrays} also implies that each cell is unbounded
and hence also proves that the graph $\fpvdmth(S)$ does not contain
circles~\cite{debergComputationalGeometryAlgorithms2008,aurenhammerVoronoiDiagramsDelaunay2013}. It
then follows that $e_1$ and $e_m$ are half-infinite edges (halflines),
and all other edges $e_i$ are bounded line segments. Furthermore:

\begin{figure}
    \begin{minipage}[t]{0.45\linewidth}
        \centering
        \includegraphics*[]{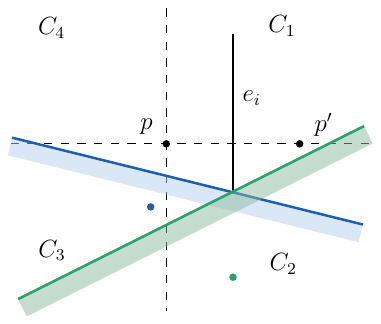}
        \caption{Illustrating the proof of \autoref{obs:neighbours}: the edge $e_i$ defined by $p$ and the clockwise neighbor $p'$ comes first in clockwise order.}\label{fig:fpv-neighbors}
    \end{minipage}
    \hfill
    \begin{minipage}[t]{0.45\linewidth}
        \centering
        \includegraphics*[page=6, trim = 35 0 0 0]{fpv-example.pdf}
        \caption{Illustrating the contradiction in the proof of \autoref{obs:order-of-edges}: the order of points $p_j, p_i$ contradicts the order of edges $e_i, e_j$.}    
        \label{fig:clockwiser-order}
    \end{minipage}
\end{figure}
\begin{restatable}{observation}{obsNeighbours}\label{obs:neighbours}
  The defining point $p_1$ for the first edge $e_1$ of any cell
  $\cellmth(p, S)$ is the clockwise neighbor of $p$ on
  $\ch(S)$. Similarly, the defining point $p_m$ for the last edge
  $e_m$ is the counter-clockwise neighbor of $p$ on $\ch(S)$.
\end{restatable}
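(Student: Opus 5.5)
We use the structure of the unbounded cell $\cellmth(p,S)$ and look at its two half-infinite edges $e_1$ and $e_m$. Let $p'$ be the clockwise neighbor and $p''$ the counter-clockwise neighbor of $p$ on $\ch(S)$. We first show that the bisectors of $pp'$ and $pp''$ each contribute an unbounded edge to $\cellmth(p,S)$. We then argue, using the clockwise order convention of the edges, that the edge coming from $p'$ is the first one.

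\textbf{Step 1 (the neighbors define unbounded edges).} The line $\ell$ through $p$ and $p'$ supports $\ch(S)$, so every point of $S$ lies in one closed halfplane of $\ell$, and by general position only $p$ and $p'$ lie on $\ell$. Move far along the perpendicular bisector $b(p,p')$ in the direction $u$ that points away from $\ch(S)$, i.e.\ the outward normal of the hull edge $pp'$. For a point $x = \midmth(p,p') + t u$ with $t \to \infty$ and any $q \in S$, we have
\[
d(x,q)^2 - d(x,p)^2 = \|q\|^2 - \|p\|^2 - 2\langle \midmth(p,p'),\, q-p \rangle - 2t\langle u,\, q-p\rangle .
\]
Every $q \neq p,p'$ lies strictly on the inner side of $\ell$, so $\langle u, q-p\rangle < 0$. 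The right-hand side is therefore eventually positive, meaning $q$ is eventually farther from $x$ than $p$ is.

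This sign needs care. In the farthest-point diagram the cell of $p$ lies on the side \emph{opposite} to $p$ (see \autoref{obs:fpvrays}). The unbounded part of $b(p,p')$ that bounds both $\cellmth(p,S)$ and $\cellmth(p',S)$ must therefore be the ray in direction $-u$, pointing through the hull. Along $-u$ the sign of $\langle -u, q-p\rangle$ is positive, so $d(x,q)<d(x,p)$ for all other $q$. Hence for large $t$ the point $\midmth(p,p') - tu$ lies in $\cellmth(p,S)\cap\cellmth(p',S)$, and this intersection contains a half-line. Since each cell is convex and bounded by a chain $e_1,\dots,e_m$ in which only $e_1$ and $e_m$ are unbounded, the edge defined with $p'$ is one of $e_1, e_m$. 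Symmetrically, the edge defined with $p''$ is the other one, provided $p' \neq p''$. If $|\ch(S)|=2$ the cell is a halfplane and the claim is trivial.

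\textbf{Step 2 (which end is which).} The half-line on $b(p,p')$ has direction $-u_{pp'}$ and the one on $b(p,p'')$ has direction $-u_{pp''}$, where the $u$'s are the outward normals of the two hull edges incident to $p$. The cell is convex and unbounded, and its recession cone is spanned by these two directions. It remains to check the cyclic orientation: traversing the boundary of $\cellmth(p,S)$ clockwise, we must meet the half-line on $b(p,p')$ first, i.e.\ coming in from infinity. We compare angles. Rotating the hull edges $pp''\to pp'$ clockwise at $p$ rotates their outward normals clockwise, and negating preserves this. We then relate the orientation of the recession cone's two bounding rays to the clockwise boundary order of an unbounded convex region: the boundary enters along one ray and leaves along the other. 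This orientation check is the main obstacle. I expect it to come down to a short sign computation with a cross product, or to a figure like \autoref{fig:fpv-neighbors}, and I would verify it on a concrete example, such as a square, to fix the convention. The statement for $e_m$ then follows by the mirror argument.
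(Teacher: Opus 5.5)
\textbf{Gap: the orientation (Step 2) is never proved.} Your Step 1 is sound. The expansion of $d(x,q)^2-d(x,p)^2$ along $\midmth(p,p')-tu$ shows that $\cellmth(p,S)\cap\cellmth(p',S)$ contains a half-line. Hence the two unbounded edges of $\cellmth(p,S)$ are exactly the ones shared with the two hull neighbors. But that only proves the unordered statement $\{p_1,p_m\}=\{p',p''\}$. What the observation asserts is the orientation: the clockwise neighbor gives the \emph{first} edge, not the last. The paper's proof addresses exactly this. It assumes the edge with the clockwise neighbor is some $e_i$ with $i>1$ and rules out each quadrant for the defining point of $e_1$.

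Your Step 2 does not prove the orientation; it only says you expect a sign computation and would check an example. An example does not settle the general case. Your rotation heuristic does not settle it either. Whether going from $pp''$ to $pp'$ is a clockwise turn depends on whether you orient the edges along the hull traversal or view them as rays from $p$; as rays it is a counterclockwise turn through the interior angle, while the outward normals turn clockwise. You would also still need the unproved link between the angular order of the two recession directions and the clockwise boundary order. So the proposal stops exactly where the statement differs from its mirror image.

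\textbf{How to close it with your own setup.} Let $R$ be counterclockwise rotation by $90^\circ$. The hull is ordered clockwise, so its interior lies to the right of $p\to p'$, and the outward normal $u_{pp'}$ is a positive multiple of $R(p'-p)$. The cell $\cellmth(p,S)\subseteq H(p,p')$ lies on the side of the bisector containing $p'$. Traversing that edge with the cell on the right therefore means moving in direction $R(p'-p)$, i.e.\ along $+u_{pp'}$. By Step 1 this edge escapes to infinity in direction $-u_{pp'}$. So the clockwise traversal enters the boundary from infinity along it, and it is $e_1$.

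For $p''$ the hull edge is oriented $p''\to p$, so $u_{pp''}$ is a positive multiple of $R(p-p'')$. The traversal direction along the edge on the bisector of $p,p''$ is $R(p''-p)$, a positive multiple of $-u_{pp''}$, which is that edge's escape direction. So this edge is left towards infinity and is $e_m$.

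With this added, your argument is complete. It is a genuinely different route from the paper's quadrant contradiction: it identifies the unbounded edges directly and reduces the orientation to a single explicit sign check.
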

\begin{proof}
    Let $e_1, \dots, e_m$ be the edges bounding $\cellmth(p, S)$ in clockwise order.
    We show that the clockwise neighbor $p'$ of $p$ is a defining point of $e_1$.
    Then by symmetry, it follows the counter-clockwise neighbor of $p$ defines $e_m$.
    
    W.l.o.g\@. we assume that $p$ is the origin and that $p'$ lies to the right of $p$ on the $x$-axis.
    For contradiction assume that the edge of $p$ and $p'$ is $e_i$ with $i> 1$.
    Then the edge $e_1$ precedes $e_i$ in the clockwise order.
    Let $p''$ be the point that defines $e_1$.
    The bisector of $p$ and $p'$ is parallel to the y-axis, hence $e_i$ is as well.
    Consider the four quadrants $C_1, C_2, C_3, C_4$, where $C_1$ is the region of positive $x$ and $y$ coordinates and the other are in clockwise order, as shown in \autoref{fig:fpv-neighbors}.
    The point $p''$ cannot lie in $C_4$, because then $p$ would not be a vertex of the convex hull.
    If $p'' \in C_2 \cup C_3$ then the intersection of the bisectors of $p'$ and $p$ and that of $p'$ and $p''$ lies below the $x$ axis.
    The region where $p$ is farther than $p''$ is the halfspace below this intersection point.
    But since $e_i$ lies above the intersection point, and hence in the region where $p$ is closer, this would imply that $e_1$ is no edge of $\cellmth(p, S)$ at all, a contradiction.
    Hence, $p'' \in C_1$.
    But this implies that $p''$ is the clockwise neighbor of $p$ or that $p, p'$ and $p''$ are collinear.
    The first statement contradicts our assumption while the former implies that $p'$ or $p''$ is no vertex of the convex hull and thus has no cell, a contradiction.
    Thus, an edge preceding $e_i$ does not exist.
\end{proof}
\begin{restatable}{observation}{obsOrderOfEdges}\label{obs:order-of-edges}
  Let $e_1, \dots, e_m$ be the edges of $\cellmth(p,S)$ in clockwise
  order, then the corresponding defining points $p_1, \dots, p_m$ are
  also in clockwise order on the convex hull.
\end{restatable}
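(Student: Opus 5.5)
We argue by contradiction, in the same style as the proof of \autoref{obs:neighbours}, and aim for a configuration like \autoref{fig:clockwiser-order}. Each edge $e_i$ of $\cellmth(p,S)$ lies on the bisector of $p$ and $p_i$, and its outward normal (pointing out of the cell, towards where $p_i$ is farther) is parallel to the direction from $p$ to $p_i$. Indeed, $H(p,p_i)$ is the side of the bisector containing $p_i$ (where $p$ is farther), and the cell lies in it, so leaving the cell across $e_i$ means moving towards $p$, i.e. in direction $p-p_i$. So the outward normal of $e_i$ is $u_i=(p-p_i)/|p-p_i|$.

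The first step uses that $\cellmth(p,S)$ is a convex, unbounded polygonal region (an intersection of halfplanes, and unbounded by \autoref{obs:fpvrays}). Traversing its boundary edges $e_1,\dots,e_m$ in clockwise order, the outward normals $u_1,\dots,u_m$ rotate monotonically clockwise, with total turn less than $2\pi$ because the region is unbounded. Hence the directions $p_i-p=-u_i$ also rotate monotonically clockwise around $p$ as $i$ increases, and the total angular sweep is less than $2\pi$.

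The second step translates this angular order around $p$ into order along the hull. Since $p$ is a vertex of $\ch(S)$, all other hull vertices lie inside a wedge at $p$ of angle less than $\pi$. That wedge is bounded by the rays towards the two hull neighbours of $p$. As we walk along $\ch(S)$ clockwise starting at the clockwise neighbour of $p$, the direction from $p$ to the current vertex rotates monotonically in one sense; this is the standard fact that a convex polygon seen from one of its vertices is angularly sorted in hull order. By \autoref{obs:neighbours}, $p_1$ is the clockwise neighbour and $p_m$ the counter-clockwise neighbour, so they mark the two extremes of the wedge. The clockwise rotation of $p_i-p$ from the first step must therefore go from the $p_1$ ray to the $p_m$ ray inside the wedge. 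It cannot wrap the other way, since that would sweep more than $\pi$ outside the wedge where no points lie.

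So the angular order of the $p_i$ around $p$ coincides with the hull order from $p_1$ to $p_m$, which is the claim. Equivalently, if $p_j$ preceded $p_i$ on the hull while $e_i$ precedes $e_j$, the normals would rotate against the convexity of the cell, as in \autoref{fig:clockwiser-order}.

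The main obstacle is keeping the orientation conventions consistent. We must check that a clockwise traversal of the cell boundary corresponds to a clockwise rotation of $p_i-p$, and not a counter-clockwise one, after the sign flip $u_i\mapsto -u_i$. We must also check that this matches the hull's clockwise order rather than reversing it. I would pin this down by anchoring with \autoref{obs:neighbours}: the first and last edges are fixed to the two hull neighbours, which fixes the direction. General position guarantees the directions $p_i-p$ are distinct, so the monotonicity is strict.
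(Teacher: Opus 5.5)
Your proof is correct, but it takes a different route from the paper's.

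\textbf{The paper's route.} The paper argues by contradiction on the diagram as a whole. From a point on $e_i$ it shoots the rays opposite to $p$ and $p_i$ (via \autoref{obs:fpvrays}), which split the plane into two regions $R_1,R_2$. It then argues, by transitivity, that the cells of the successive counter-clockwise neighbours of $p_i$, up to $p$, all lie in $R_1$. So $\cellmth(p_j,S)$ would lie in $R_1$, while $e_j\subseteq R_2$, which is a contradiction.

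\textbf{Your route.} You stay inside the single convex cell $\cellmth(p,S)$. The outward normal of $e_i$ is the direction $p-p_i$, and convexity of the cell forces these normals to turn monotonically. Seen from the hull vertex $p$, the other hull vertices are angularly sorted in hull order inside a wedge of angle $\alpha<\pi$. So you are using the Gauss map of the cell to reproduce the fan of hull vertices at $p$.

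\textbf{What each buys.} Your argument is local and metric. It needs only the convexity of the cell and of the hull, and it avoids the paper's rather informal transitivity step over neighbouring cells. The paper's argument instead reuses the ray-splitting toolkit of the surrounding observations and lemmas.

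\textbf{Your orientation worry.} The sign check you flag is not actually needed once you anchor with \autoref{obs:neighbours}. The directions $p_1-p,\dots,p_m-p$ form a monotone angular sweep from the $p_1$ ray to the $p_m$ ray. The hull order from $p_1$ to $p_m$ is also a monotone sweep between these two rays, through the wedge. If the first sweep stays inside the wedge, the two orders coincide (directions are distinct by general position), no matter which sense you call clockwise.

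\textbf{Staying inside the wedge.} To guarantee this, the cleanest justification is the sharper fact that the outward normals of an unbounded convex polygon turn by at most $\pi$ in total, not merely less than $2\pi$. A sweep going around the outside of the wedge would have to turn by $2\pi-\alpha>\pi$. Your bound of less than $2\pi$ also works, but only together with the observation that every intermediate $p_i$ must lie in the wedge, which your sentence about wrapping the other way states only loosely.
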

\begin{proof}
    Let $\pi : \{1, \dots, m\} \rightarrow \{1, \dots, m\}$ be a permutation such that $\pi(i)$ gives the index of $p_i$ in the clockwise ordering of $p_1, \dots, p_m$, \ie the sequence $p_{\pi^{-1}(1)}, \dots, p_{\pi^{-1}(m)}$ is ordered clockwise.
    We show $\pi(i) = i$ for all $i = 1, \dots, m$.
    By \autoref{obs:neighbours} we already know $\pi(1) = 1$ and $\pi(m) = m$.
    For contradiction assume that there are $i,j$ with $\pi(i) > \pi(j)$ while $1 \le i < j \le m$.

    Let $e_1', \dots, e_{m'}'$ be the edges that bound $\cellmth(p_i, S)$ in clockwise order.
    Consider that $e_i$ is also an edge of $\cellmth(p_i, S)$, hence there is an edge $e_{i'}' = e_i$.
    Now consider rays $r$ and $r'$ that point opposite to $p$ and $p_i$ respectively and emanate from a point on $e_i$ as shown in \autoref{fig:clockwiser-order}.
    By \autoref{obs:fpvrays} they split the plane into two regions $R_1, R_2$.
    The edges $e_1, \dots, e_i$ and $e_{i}', \dots, e_{m'}'$ lie in one region, say $R_1$ while $e_i, \dots, e_j, \dots e_m$ and $e_{1}', \dots, e_{i'}'$ lie in the other region $R_2$.
    Since, $e_{m'}' \subseteq R_1$ the cell of the counter-clockwise neighbor of $p_i$ lies entirely in $R_1$ too.
    By transitivy this property extends to subsequent counter-clockwise neighbors, up to $p$.
    Since $\pi(i) > \pi(j)$ by assumption, one such neighbor is $p_j$.
    Hence, $\cellmth(p_j, S)$ lies in $R_1$ too, but this is a contradiction to $e_{j} \subseteq R_2$.
    We conclude that $\pi(i) < \pi(j)$ holds for all $i,j$ with $1 \le i < j \le m$ and thus the statement follows.
\end{proof}

\begin{figure}
    \centering
    \includegraphics*[page=4]{fpv-example.pdf}
    \caption{The defining points of a vertex admit rays that
        divide the plane and $\fpvdmth(S)$ in three parts.
    }\label{fig:fpv-rays}
\end{figure}

\begin{restatable}{observation}{fpvVertexRays}\label{obs:fpv-vertex-rays}
  Let $p_1,p_2,p_3$ be three defining points of a vertex $v$, and let
  $r_i$ be the ray starting at $v$ pointing opposite to $p_i$. The three
  rays $r_1,r_2,r_3$ divide the plane and $\fpvdmth(S)$ in three
  disjoint parts. See \autoref{fig:fpv-rays}.
\end{restatable}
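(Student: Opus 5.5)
The three rays $r_1,r_2,r_3$ emanate from the common point $v$ in three distinct directions, namely opposite to $p_1,p_2,p_3$. The plan has two parts: first show that the rays split the plane into three wedges, and then show that each edge of $\fpvdmth(S)$ other than the three edges at $v$ lies in a single wedge.

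\textbf{The rays form three wedges.} Since $v$ is equidistant from $p_1,p_2,p_3$, these points lie on a circle centred at $v$. By general position they are distinct, so the directions from $v$ to the $p_i$ are pairwise distinct, and hence so are the directions of the $r_i$. Three distinct rays from a common apex cut the plane into three open angular regions (wedges) $W_1,W_2,W_3$. I will also check that each wedge has angle less than $\pi$. This holds because $v$ lies in the interior of the triangle $p_1p_2p_3$, which is what makes $v$ a farthest-point vertex: the triangle is acute.

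\textbf{The diagram splits with the wedges.} Since $v\in\cellmth(p_i,S)$, \autoref{obs:fpvrays} gives $r_i\subseteq\cellmth(p_i,S)$. So each ray lies entirely inside one cell, and in particular runs through the interior of that cell except possibly along its boundary.

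\begin{itemize}
\item \emph{No other edge crosses a ray.} An edge $e$ not incident to $v$ that crossed $r_i$ transversally would contain points strictly on both sides of $r_i$ near the crossing point. Near that point both sides lie in $\cellmth(p_i,S)$. But $e$ is a boundary between two cells, so it cannot pass through the interior of a single cell. The remaining possibility is that $e$ touches $r_i$ only at a point of $\partial\cellmth(p_i,S)$. I will rule this out, or show it does not straddle two wedges, using convexity of the cell: a convex cell containing a ray has a boundary that does not cross the ray transversally except at $v$ itself.
\item \emph{The edges at $v$.} The three edges incident to $v$ are the bisector pieces between the pairs $p_i,p_j$. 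The edge between $p_i$ and $p_j$ lies between $r_i$ and $r_j$, because the region near $v$ between $r_i$ and $r_j$ contains points of both cells.
\end{itemize}

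Since $\fpvdmth(S)$ is a tree and removing $v$ leaves three components, each component lies within one wedge. This yields the three disjoint parts.

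\textbf{Main obstacle.} The delicate step is showing that no edge away from $v$ meets $r_i$. The reason is that $r_i$ could in principle run along the boundary of $\cellmth(p_i,S)$. I would handle this by showing that $r_i$ minus $v$ lies in the interior of the cell. Take a point $s=v+t(v-p_i)$ with $t>0$. For every $q\neq p_i$ we have $d(v,q)\le d(v,p_i)$. We need the strict inequality $d(s,q)<d(s,p_i)$. Expanding gives
\[
d(s,p_i)^2-d(s,q)^2=d(v,p_i)^2-d(v,q)^2+2t\,\langle v-p_i,\,q-p_i\rangle .
\]
The first part of this difference is nonnegative. For the last term, $|q-v|\le|p_i-v|$ gives $\langle v-p_i,q-p_i\rangle=|p_i-v|^2-\langle v-p_i,v-q\rangle\ge|p_i-v|^2-|p_i-v||q-v|\ge0$. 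Equality holds only when $q-v=p_i-v$, i.e.\ $q=p_i$. So the difference is positive for every $q\neq p_i$, and $s$ lies in the interior of the cell. This rules out any edge meeting the open rays.
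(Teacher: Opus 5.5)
Your argument is correct and is essentially the paper's: both show that the rays meet $\fpvdmth(S)$ only at $v$, and then use that $\fpvdmth(S)$ is a tree, so each of the three subtrees hanging off $v$ stays in one region. Your explicit computation that the open ray $r_i\setminus\{v\}$ lies in the \emph{interior} of $\cellmth(p_i,S)$ actually sharpens the paper. The paper cites \autoref{obs:fpvrays} at this point, but that observation only gives containment in the closed cell, and the paper leaves the gap unaddressed. With your computation, the vaguer first half of your ``no other edge crosses a ray'' bullet, which appeals to convexity, becomes unnecessary.

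One claim should be deleted. It is false that each wedge has angle less than $\pi$, and false that the defining triangle of a farthest-point Voronoi vertex must be acute. Take any three points forming an obtuse triangle: their circumcenter lies outside the triangle but is still the vertex of their farthest-point Voronoi diagram. This is exactly why the proof of \autoref{theorem:singlesearch} has to test whether $p_a,p_b,p_c$ form a non-obtuse triangle. In that case $p_1,p_2,p_3$ lie on an arc of less than $180^\circ$ of the circle around $v$, so all three rays lie in a half-plane and one wedge is reflex.

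Fortunately, your argument never uses convexity of the wedges. A connected piece of the diagram that avoids the rays lies in a single wedge regardless of the wedge's angle. The remaining point is that the edge defined by $p_i,p_j$ lies in the wedge bounded by $r_i$ and $r_j$. This is cleanest to justify as follows: the open rays lie in the interiors of the three cell sectors at $v$, so rays and edges alternate cyclically around $v$.
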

\begin{proof}
    Consider the edges $e_1, e_2, e_3$ that are incident to $v$, as shown in \autoref{fig:fpv-rays}.
    We first consider two rays.
    Each edge $e_i$ is the area where the two defining points $p_a, p_b$ have the same distance, \ie their cells meet.
    Consider the rays $r_a, r_b$ that emanate from $v$ and point opposite to $p_a, p_b$ respectively.
    They divide the plane into two regions $R_1, R_2$.
    Based on \autoref{obs:fpvrays} we can show that the rays do not intersect any edge of $\fpvdmth(S)$, except $e_1, e_2, e_3$ on $v$ since they emanate on $\{v\} = e_1 \cap e_2 \cap e_3$.
    One region, say $R_1$, contains $e_i$ and the subtree reachable via $e_i$ from $v$ while $R_2$ contains the other parts of $\fpvdmth(S)$.
    Thus, when considering all three rays $r_1, r_2, r_3$ the plane is split in three regions containing one edge of $e_1, e_2, e_3$ respectively.
\end{proof}

The last observation regards the subgraph induced by a section
$\ch(S_i)_{a:b}$.
Let $G$ be the subgraph consisting of all edges and
vertices that are incident to a cell of a point of $\ch(S_i)_{a:b}$.
This subgraph has just one component, \ie is a subtree:
The cells of neighboring points of $\ch(S_i)_{a:b}$ share a half-infinite edge,
hence lie in the same component of $G$.
Since $G$ contains the boundary of every cell, all cells are connected transitively.

\begin{observation}\label{obs:section-subtree}
    For any set of points $S$ and section of the convex hull $\ch(S)_{a:b}$, the edges and vertices of \,$\fpvdmth(S)$ that are adjacent to cells of points of \,$\ch(S)_{a:b}$ form a subtree.
\end{observation}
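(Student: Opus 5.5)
We plan to prove \autoref{obs:section-subtree} by showing that the subgraph $G$ is connected; acyclicity is then inherited from $\fpvdmth(S)$. Let $G$ be the union of the closed boundaries (edges together with their endpoint vertices) of the cells $\cellmth(p,S)$ for $p \in \ch(S)_{a:b}$. Every subgraph of a forest is a forest. By the remark following \autoref{obs:fpvrays}, $\fpvdmth(S)$ contains no cycles, so $G$ contains no cycles either. It therefore suffices to show that $G$ is connected.

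\textbf{Step 1: each cell boundary is connected.} Fix $p \in \ch(S)_{a:b}$. By definition, $\cellmth(p,S)$ is an intersection of closed halfplanes, hence a convex polygonal region. By \autoref{obs:fpvrays} it is unbounded. Its boundary is the chain of edges $e_1,\dots,e_m$ in clockwise order, with consecutive edges sharing a vertex. By the discussion after \autoref{obs:fpvrays}, $e_1$ and $e_m$ are halflines and the other edges are bounded segments. The boundary of a convex unbounded region with nonempty interior, which is not a strip, is a single connected polygonal chain. We exclude the strip case by general position: two parallel boundary lines would force two bisectors of $p$ to be parallel, i.e.\ three collinear sites. (If $|S|=2$ the situation is trivial.) Hence the boundary edges of a single cell, together with their vertices, form a connected path in $G$.

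\textbf{Step 2: consecutive cells are linked.} Let $p_j, p_{j+1}$ be consecutive points of $\ch(S)_{a:b}$. They are clockwise neighbours on $\ch(S)$. By \autoref{obs:neighbours}, the first edge $e_1$ of $\cellmth(p_j,S)$ is defined by $p_j$ and its clockwise neighbour $p_{j+1}$. Therefore $e_1 = \cellmth(p_j,S)\cap\cellmth(p_{j+1},S)$ lies on the boundary of both cells, and the two paths from Step 1 share an edge.

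\textbf{Step 3: induction.} Inducting along the section from $p_a$ to $p_b$, the union of the boundaries of all cells in the section is connected. Since $G$ is exactly this union, $G$ is a connected acyclic subgraph, i.e.\ a subtree.

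The main obstacle I expect is Step 1. There we must make sure that the boundary of an unbounded convex cell is one chain and not two disjoint lines. This is where general position, or the ordering of edges from \autoref{obs:order-of-edges}, is needed.
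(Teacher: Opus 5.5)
Your proof is correct and follows essentially the same route as the paper: consecutive points of the section share the half-infinite edge given by \autoref{obs:neighbours}, and each cell boundary lies in the subgraph, so the subgraph is connected, while acyclicity is inherited from $\fpvdmth(S)$. Your Step~1 only makes explicit the connectedness of a single cell boundary, which the paper uses implicitly; the strip case is in fact impossible even without general position, since no halfplane $H(p,q)$ bounding $\cellmth(p,S)$ contains $p$.
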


\subparagraph{Farthest-point Voronoi diagrams and smallest enclosing
  disks.}
A well known fact regards the points defining $\sedmth(S)$.
\begin{fact}[restate=factSEDDefiningPoints,name=]
The smallest enclosing disk is defined by two or three points lying on its boundary, forming either an antipodal pair or a non-obtuse triangle.
\end{fact}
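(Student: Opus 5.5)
The plan is to prove the fact in three stages: existence and uniqueness of the smallest enclosing disk, then that it is pinned down by at most three boundary points, and finally the geometric form those points take. Let $D$ be a smallest enclosing disk of $S$, with center $c$ and radius $r$. The function $f(x) = \max_{p \in S} d(x,p)$ is continuous and coercive, so it attains a minimum. It is also strictly convex along lines, because each $d(x,p)^2$ is strictly convex and a maximum of strictly convex functions is strictly convex. Hence the minimizer $c$ is unique, and $\sedmth(S)$ is well defined. Let $T = \{p \in S : d(c,p) = r\}$ be the set of points on the boundary of $D$. We may assume $|S| \ge 2$; otherwise the disk has radius $0$ and the statement is degenerate.

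The central step is the characterization $c \in \operatorname{conv}(T)$. Suppose instead that $c \notin \operatorname{conv}(T)$. Then a line strictly separates $c$ from $\operatorname{conv}(T)$, which gives a unit direction $u$ with $\langle u, p - c\rangle > 0$ for every $p \in T$. Move the center slightly, to $c + \varepsilon u$. For each $p \in T$,
\[
d(c+\varepsilon u,p)^2 = r^2 - 2\varepsilon\langle u,p-c\rangle + \varepsilon^2,
\]
which is less than $r^2$ for small $\varepsilon > 0$. Every point of $S \setminus T$ lies at distance strictly less than $r$ from $c$, so by continuity it stays within distance $r$ for small $\varepsilon$. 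Since $S$ is finite, the new disk has a strictly smaller radius, contradicting minimality. Therefore $c \in \operatorname{conv}(T)$.

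Carathéodory's theorem in the plane now gives at most three points of $T$ whose convex hull contains $c$. We need at least two such points, since $c$ equals a single point of $T$ only when $r = 0$. There are two cases.
\begin{itemize}
\item If $c$ lies on a segment $pq$ with $p,q \in T$, then, as $d(c,p) = d(c,q) = r$, $c$ is the midpoint $\midmth(p,q)$. So $pq$ is a diameter of $D$, and $p,q$ form an antipodal pair.
\item Otherwise $c$ lies in the triangle $pqs$ with $p,q,s \in T$ and on no edge of it. Then $D$ is the circumcircle of $pqs$. The circumcenter lies in the closed triangle exactly when the triangle is non-obtuse (inside for acute, on the hypotenuse for right). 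Hence $pqs$ is non-obtuse.
\end{itemize}

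It remains to see that these two or three points really define $D$. The smallest enclosing disk of $\{p,q\}$, or of $\{p,q,s\}$, has radius at least $r$, since the same convex-hull argument applied to the subset shows that $c$ is optimal for it. It is contained in the constraints of $S$, so it equals $D$. The main obstacle is the perturbation argument behind $c \in \operatorname{conv}(T)$; the remaining steps are routine. Under the general position assumption, at most three points lie on the boundary, which makes the defining set canonical.
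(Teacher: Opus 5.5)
Your necessity argument is sound; the gap is in your last step, where you claim that the two or three extracted points actually define $D$. You justify that the smallest enclosing disk of $\{p,q\}$ or $\{p,q,s\}$ has radius at least $r$ by ``the same convex-hull argument applied to the subset''. That argument only gives a necessary condition: if $c$ is an optimal center, then $c\in\mathrm{conv}(T)$. Here you need the converse, namely that a center lying in the convex hull of points all at distance $r$ from it cannot be improved. This converse does not follow from the perturbation argument. It is exactly the first half of the paper's proof, which considers the open disks $D(p)=\{x : d(x,p)<d(z,p)\}$ and shows they have no common point.

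The converse is easy to supply. Suppose $c=\sum_i\lambda_i p_i$ with $\lambda_i\ge 0$, $\sum_i\lambda_i=1$ and $d(c,p_i)=r$. Then $\sum_i\lambda_i\langle x-c,p_i-c\rangle=0$, so
\[
\sum_i \lambda_i\, d(x,p_i)^2 = r^2 + d(x,c)^2 \quad\text{for all } x .
\]
Hence some $p_i$ is farther than $r$ from every $x\ne c$. This shows at once that $D$ is the unique smallest enclosing disk of the subset, and also of $S$. Alternatively, you could keep your route by applying existence and the necessary condition to the subset and enumerating the cases, but that has to be written out. For instance, for an acute triangle, a two-point optimum would put the third vertex strictly inside the circle on that diameter, i.e.\ at an obtuse angle.

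The same fact, $D(p_i)\subseteq\{x:\langle x-z,p_i-z\rangle>0\}$, is also the safe way to phrase the paper's argument: three such open half-planes have no common point when $z$ is a positive combination of the three sites. The paper's own intermediate claim is that $D(p_a)\cap D(p_b)$ lies on $p_b$'s side of the line through $z$ and $p_a$. That claim actually fails whenever $\angle p_a z p_b<90^\circ$.

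Apart from this, your route matches the paper's. The paper asserts that, absent an antipodal pair or a non-obtuse triangle on the boundary, all boundary points lie in an arc of less than $180^\circ$, so the disk can be shifted and shrunk. Your separation-plus-Carath\'eodory step is a rigorous version of that assertion. You also prove existence, which the paper takes for granted.

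There is one small slip in that part. The function $f$ itself is not strictly convex: inside the farthest-point cell of a site $p$, $f(x)=d(x,p)$ is linear along rays pointing away from $p$. What your reasoning actually shows is that $f^2=\max_p d(x,p)^2$ is strictly convex. Since $f^2$ has the same minimizers as $f$, this suffices for uniqueness.
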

    For completeness, we provide a proof in \autoref{app:omitted_proofs}.

Since the defining points lie on the boundary, they are the farthest points from the center $z$ of $\sedmth(S)$.
If there are three defining points $p_a, p_b$ and $p_c$, then $z$ is a vertex of $\fpvdmth(S)$ defined by $p_a,p_b$ and $p_c$.
In case of an antipodal pair $p_a$ and $p_b$, the center $z$ lies on the edge defined by $p_a$ and $p_b$ and corresponds to their midpoint $\midmth(p_a,p_b)$.
\autoref{fig:center-in-fpvd} presents two examples.
Hence, the farthest-point Voronoi diagram can be interpreted as a candidate set of linear size.
In fact the first subquadratic algorithm~\cite{shamosClosestpointProblems1975} for smallest enclosing disk utilizes this property to determine $\sedmth(S)$ in time $O(n \log n)$.
\newcommand{\bigfpvexone}[1]{\includegraphics*[page=#1, trim = 0 0 5 0]{bigfpv.pdf}}
\newcommand{\bigfpvextwo}[1]{\includegraphics*[page=#1, trim = 20 80 20 75]{bigfpv.pdf}}
\begin{figure}
    \centering
    \bigfpvexone{4}
    \hfill
    \bigfpvextwo{2}
    \caption{
    The center (blue) of $\sedmth(S)$ lies on an edge of $\fpvdmth(S)$ (left) or is a vertex (right).
    }\label{fig:center-in-fpvd}
\end{figure}

For every node $v$ of a secondary range tree we compute $\fpvdmth(P(v))$ in time $O(|P(v)|)$ based on the convex hull \cite{aggarwalLineartimeAlgorithmComputing1989}.
The bound of $O(n \log^2 n)$ preprocessing time and storage for the complete tree follows with a straightforward analysis.
Based on~\cite{khareImprovedBoundsSmallest2014,brassRangeAggregateQueriesGeometric2013} we then have:
\begin{lemma}\label{lemma:preprocessing}
  Using $O(n\log^2 n)$ time and space we can build a data structure that given a query $Q$ in time $O(\log^2 n)$ yields $m = O(\log n)$ canonical sets representing $\ch(S \cap Q)$ including their convex hulls and farthest-point Voronoi diagrams. In time $O(m^2 \log n)$ we can derive $O(m)$ canonical sections of $\ch(S \cap Q)$.
\end{lemma}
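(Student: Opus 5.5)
The plan is to combine the modified range tree of~\cite{khareImprovedBoundsSmallest2014} with the per-node preprocessing described above, and then to account for each part of the statement separately: construction cost, query cost, and the cost of extracting canonical sections.

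\textbf{Construction.} The structure is a 2D range tree. Its primary tree is built on $x$-coordinates, and every primary node stores a secondary tree on $y$-coordinates. Each secondary node $v$ additionally stores $\ch(P(v))$ and $\fpvdmth(P(v))$. The convex hulls are built bottom-up: the hull of $v$ is obtained by merging the (already sorted) hulls of its two children in $O(|P(v)|)$ time. From the hull, $\fpvdmth(P(v))$ is computed in $O(|P(v)|)$ time by~\cite{aggarwalLineartimeAlgorithmComputing1989}. Consider a fixed secondary tree built on $n'$ points. Each of its levels partitions those points, so the level costs $O(n')$, and over $O(\log n)$ levels the secondary tree costs $O(n'\log n)$. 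Every point appears in $O(\log n)$ secondary trees, one per ancestor in the primary tree. Summing gives $O(n\log^2 n)$ time and space. The modifications of~\cite{khareImprovedBoundsSmallest2014} only change which nodes the query visits, so they do not affect this bound.

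\textbf{Query.} We run the standard range-tree search: $O(\log n)$ primary canonical nodes, each with a secondary search of $O(\log n)$ time, for $O(\log^2 n)$ in total. The pruning argument of~\cite{khareImprovedBoundsSmallest2014} then discards every node whose points cannot contribute a vertex of $\ch(S\cap Q)$, as in \autoref{fig:khare}. What remains are $m=O(\log n)$ canonical sets with $\ch(S_1\cup\dots\cup S_m)=\ch(S\cap Q)$. Their hulls and diagrams are already stored, so we return pointers to them and do no extra work.

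\textbf{Canonical sections.} By \autoref{lemma:ch-bound} there are only $O(m)$ canonical sections. To find them we use the procedure of~\cite{brassRangeAggregateQueriesGeometric2013}. For each pair of hulls we compute the relevant common tangents by binary search in $O(\log n)$ time, which gives $O(m^2\log n)$ over all pairs. With these tangents we walk around $\ch(S\cap Q)$ and record, for each maximal run contributed by a single $S_i$, the index range $a:b$ in $\ch(S_i)$.

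\textbf{Main obstacle.} The step needing most care is justifying the pruning: that only $O(\log n)$ canonical nodes can contribute hull vertices, and that the discarded nodes are recognized within the $O(\log^2 n)$ budget. Here I would rely directly on~\cite{khareImprovedBoundsSmallest2014}, whose structure and argument are exactly this. The remaining steps are routine.
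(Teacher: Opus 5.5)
Most of this matches the paper: the level-by-level cost count, bottom-up hull merging plus the linear-time diagram construction of~\cite{aggarwalLineartimeAlgorithmComputing1989}, reliance on~\cite{khareImprovedBoundsSmallest2014} for $m=O(\log n)$, and pairwise tangents as in~\cite{brassRangeAggregateQueriesGeometric2013} for the canonical sections. The step that fails is the query. You describe it as the \emph{standard} range-tree search followed by pruning, and you claim the modifications of~\cite{khareImprovedBoundsSmallest2014} ``only change which nodes the query visits''. Pruning the standard canonical decomposition cannot leave $O(\log n)$ nodes. Take $S$ in convex position (say, points near a circle) and $Q=[x_1,\infty)\times[y_1,\infty)$, where $x_1$ excludes only the leftmost point. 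In a perfectly balanced primary tree the canonical slabs then have sizes $n/2,n/4,\dots$; arrange that each of them contains exactly one point below $y_1$. Every point of $S\cap Q$ is a vertex of $\ch(S\cap Q)$, so no canonical node can be discarded. Yet in a slab of size $s$ the suffix $y\ge y_1$ needs $\Theta(\log s)$ secondary canonical nodes, giving $\Theta(\log^2 n)$ in total.

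So the $O(\log n)$ bound must come from changing the decomposition itself, not from pruning it. For instance, suppose all secondary structures use common $y$-splitting values. Then the query cells form a grid with $O(\log n)$ rows and columns. The hull boundary, a union of four monotone chains, meets only $O(\log n)$ cells. A superset of $O(\log n)$ candidate cells can be read off the dominance staircases of the nonempty cells in $O(\log^2 n)$ time. Consequently your preprocessing analysis cannot simply dismiss the modification; it has to be carried out for the modified tree. It does go through with the same counting: each point lies in $O(\log n)$ primary nodes and in $O(\log n)$ secondary nodes of each, and every node stores a hull and diagram linear in its point set. As written, however, your bound is argued for a structure that does not deliver $m=O(\log n)$.
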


\section{Answering a smallest enclosing disk query}
\label{sec:decomp-search}

In this section, we show how we can efficiently answer smallest
enclosing disk queries using farthest-point Voronoi diagrams. In
\autoref{subsec:decomp-single-set} we show that for any edge of
$\fpvdmth(S)$ we can decide which subtree contains the center of the
smallest enclosing disk.  This leads to a $O(\log n)$ procedure to
compute $\sedmth(S)$. In the context of range-aggregate queries this
search procedure is not immediately applicable, since we do not have
access to the diagram $\fpvdmth(S \cap Q)$. Instead, we can get the
points on $\ch(S \cap Q)$ as $O(\log n)$ canonical sections each of
which defines a subtree of $\fpvdmth(S \cap Q)$. In
\autoref{subsec:decomp-multi-sets} we develop a search procedure to
identify the defining points contained in a given canonical section
$\ch(S_i)_{a:b}$.  This procedure requires accessing edges of
$\fpvdmth(S \cap Q)$. We describe how we can implement these
operations efficiently in \autoref{subsec:find-sep-edge}. This then
results in a $O(\log^4 n)$ time algorithm for querying
$\sedmth(S \cap Q)$, as we argue in
\autoref{subsec:puttingthingstogether}.

\subsection{Determining the smallest enclosing disk for one
  set}\label{subsec:decomp-single-set}

Let $S$ be a set of points.
An edge or vertex of $\fpvdmth(S)$ contains the center $z$ of $\sedmth(S)$.
Since $\fpvdmth(S)$ is a tree, from any point on an edge or vertex of $\fpvdmth(S)$ there is unique path to $z$.
Given an edge $e$ of $\fpvdmth(S)$, we can decide which of the two
subtrees connected by $e$ contains $z$ using the following lemma.

\begin{figure}
    \centering
    \includegraphics*[page=2,trim=0 1 0 33]{midpointobv.pdf}
    \caption{Any enclosing disk centered above the gray rays, such as $D_{z'}$ is larger than $D_s$.}\label{fig:fpv-observ}
\end{figure}

\begin{lemma}\label{lemma:optim-edge}
    Let $e$ be an edge of $\fpvdmth(S)$ and $p_a, p_b$ be the defining points of $e$, then for any $s \in e$ with $s \ne \midmth(p_a, p_b)$ the subtree of $\fpvdmth(S)$ reachable over $e$ by starting on $s$ and moving away from $\midmth(p_a, p_b)$ does not contain the center of $\sedmth(S)$. 
  \end{lemma}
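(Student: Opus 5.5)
Let $R(x)=\max_{p\in S} d(x,p)$ denote the radius of the smallest disk centered at $x$ that encloses $S$. The center $z$ of $\sedmth(S)$ is the unique minimizer of $R$. Since $R$ is a maximum of convex functions, it is convex, and because each $d(\cdot,p)^2$ is strictly convex, $R^2$ is strictly convex. The plan is to show that every point $z'$ in the subtree $T$ reached from $s$ by moving along $e$ away from $m:=\midmth(p_a,p_b)$ satisfies $R(z')>R(s)$. Then $z\notin T$, because $R(z)\le R(s)$.

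The first step is to separate $T$ from the rest of the diagram with two rays. Take $r_a$ and $r_b$ emanating from $s$ and pointing opposite to $p_a$ and $p_b$, respectively. By \autoref{obs:fpvrays}, $r_a\subseteq\cellmth(p_a,S)$ and $r_b\subseteq\cellmth(p_b,S)$. So, exactly as in the proofs of \autoref{obs:order-of-edges} and \autoref{obs:fpv-vertex-rays}, these rays cross $\fpvdmth(S)$ only at $s$. They split the plane into two regions, and the region $W$ containing the part of $e$ beyond $s$ (away from $m$) contains all of $T$. Concretely, $W$ is the wedge at $s$ bounded by $r_a$ and $r_b$ on the side away from $m$ (the gray region in \autoref{fig:fpv-observ}).

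The second step is a distance argument inside $W$. Place $e$ on the bisector of $p_a,p_b$, with $m$ on it and $s\neq m$. Let $u$ be the unit direction along $e$ from $s$ away from $m$. Both $p_a$ and $p_b$ lie on the side of $s$ towards $m$ along the bisector, so $\langle p_a-s,u\rangle<0$ and $\langle p_b-s,u\rangle<0$, the inequalities being strict because $s\neq m$. I then claim that for every $z'\in W\setminus\{s\}$,
\[
\max\bigl(d(z',p_a),d(z',p_b)\bigr)>d(s,p_a)=R(s).
\]
To see this, write $z'=s+w$. The wedge $W$ consists of the directions $w$ lying angularly between $s-p_a$ and $s-p_b$ on the side of $u$. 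Each such $w$ is a nonnegative combination of $s-p_a$ and $s-p_b$, so either $\langle w,s-p_a\rangle> 0$ or $\langle w,s-p_b\rangle>0$; indeed both hold, since the angle between $s-p_a$ and $s-p_b$ is less than $\pi$. Expanding $d(s+w,p)^2=d(s,p)^2+2\langle w,s-p\rangle+|w|^2$ then gives the strict increase. Hence $R(z')>R(s)\ge R(z)$, so $z\neq z'$ for all $z'\in T$. The point $s$ itself is not in the subtree beyond $s$, or, if it is included, it cannot be the center either: by the first-order condition, moving from $s$ towards $m$ strictly decreases both distances to $p_a$ and $p_b$, and since $s$ lies in the relative interior of $e$ only these two sites are farthest, so $R$ decreases.

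The main obstacle is the first step, namely showing rigorously that the subtree reached via $e$ lies entirely in the wedge $W$. I would argue it as in \autoref{obs:fpv-vertex-rays}. The rays lie inside the closed cells of $p_a$ and $p_b$, and since $s$ lies on the common edge, they cannot meet any other edge of the diagram. A connected subtree starting from the open part of $e$ beyond $s$ therefore cannot leave $W$. The cases $s=m$ (excluded) and $s$ an endpoint of $e$ need only minor care.
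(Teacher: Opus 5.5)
Your proof is correct and is essentially the paper's argument: the two rays from $s$ opposite to $p_a$ and $p_b$ cut off the region away from $\midmth(p_a,p_b)$, and every point there is at least as far from $p_a$ or $p_b$ as $s$. Your strict inequality on $W\setminus\{s\}$ neatly replaces the paper's separate treatment of $s=z$; $s$ must indeed be excluded from the subtree, since an endpoint $s$ of $e$ can be the center (e.g.\ the vertex of an acute defining triangle).

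One slip: the aside ``indeed both hold'' is false when $\angle p_a s p_b$ is obtuse, e.g.\ for $s$ close to the midpoint and $w$ along $r_b$. Only ``either'' is needed, and it follows from $|w|^2=\alpha\langle w,s-p_a\rangle+\beta\langle w,s-p_b\rangle$ with $\alpha,\beta\ge 0$.
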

\begin{proof}
    Let $z$ be the center of $\sedmth(S)$.
    Consider the rays $r_a$ and $r_b$ emanating from $s$ in the directions opposite to $p_a$ and $p_b$, respectively as shown in \autoref{fig:fpv-observ}.
    The rays induce two regions in the plane.
    Let $R$ be the region not containing $\midmth(p_a, p_b)$.
    Note that any point $z' \in R$ is at least as far to $p_a$ or $p_b$ as $s$, \ie $\max\{d(z', p_a), d(z', p_b)\} \ge d(s, p_a) = d(s, p_b)$.
    Hence, the enclosing disk $D_{z'}$ centered on $z'$ is at least as large as the enclosing disk $D_s$, centered at $s$.
    In case $s \ne z$, \ie $D_s$ is not optimal, we also have $z' \ne z$ by transitivity and conclude $z \notin R$.
    In case $s = z$, observe that $p_a, p_b$ are not antipodal since $s \ne \midmth(p_a, p_b)$ by assumption. Thus $s$ corresponds to an incident vertex of $e$, in particular to the vertex closer to $\midmth(p_a, p_b)$, since the other vertex admits the larger disk.
    In either case the subtree reachable by moving away from $\midmth(p_a, p_b)$ does not contain $z$ and thus the statement follows.
\end{proof}
To \emph{analyze an edge}, \ie to decide which subtree contains the center for an edge, it suffices to know the defining points and a single point on the edge.
\begin{lemma}\label{corr:optim-edge-decision}
    Given $(s, p_a, p_b)$, where $p_a, p_b \in S$ define an edge $e$ of $\fpvdmth(S)$ such that $s \in e$ we can decide in time $O(1)$ that $s$ is the center of $\sedmth(S)$ or which of the sections $\ch(S)_{a+1:b-1}, \ch(S)_{b+1:a-1}$ contains no defining points of $\sedmth(S)$.
\end{lemma}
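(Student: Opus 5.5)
Given the triple $(s, p_a, p_b)$, I will first compute $\midmth(p_a,p_b)$ and compare it with $s$ in $O(1)$ time. The edge $e$ lies on the bisector of $p_a$ and $p_b$, so $e$ and $\midmth(p_a,p_b)$ share a line. I will split into cases by the position of $s$ relative to the midpoint. Throughout I assume $p_a$ precedes $p_b$ in clockwise order on $\ch(S)$; otherwise I swap the roles.

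\textbf{Case $s = \midmth(p_a,p_b)$.} Then $s$ lies on $e$, so $s \in \cellmth(p_a,S)\cap\cellmth(p_b,S)$. Hence $p_a$ and $p_b$ are farthest points from $s$, and the disk centered at $s$ with radius $d(s,p_a)$ encloses $S$. This disk has $p_a p_b$ as a diameter, and any disk containing both $p_a$ and $p_b$ has radius at least $d(p_a,p_b)/2$. So this disk is optimal and $s$ is the center of $\sedmth(S)$; I report this.

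\textbf{Case $s \neq \midmth(p_a,p_b)$.} Then \autoref{lemma:optim-edge} applies: the subtree reachable from $s$ over $e$ by moving away from the midpoint does not contain the center $z$. It remains to translate ``this subtree'' into one of the two sections. The line through $e$ is the bisector of $p_a p_b$, and the midpoint splits it into two open halves. I will argue that the half-line on one side of the segment $p_a p_b$ can meet only cells of $\ch(S)_{a+1:b-1}$ beyond $e$, and the other half-line only cells of $\ch(S)_{b+1:a-1}$.

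For this I will use \autoref{obs:fpv-vertex-rays} and \autoref{obs:fpvrays}. Removing $e$ splits the tree $\fpvdmth(S)$ into two subtrees. By the ray argument of \autoref{lemma:optim-edge}, the rays from $s$ opposite to $p_a$ and $p_b$ separate the plane so that each subtree lies in one region. The points whose cells border one subtree form a contiguous chain of the hull between $p_a$ and $p_b$; this uses \autoref{obs:order-of-edges} and \autoref{obs:section-subtree}, the cells being ordered around the tree as their sites are ordered on the hull. Which chain goes with which side is decided by an orientation test. Consider moving from the midpoint toward $s$. If this motion goes to the side of line $p_ap_b$ that does not contain the points $p_{a+1},\dots,p_{b-1}$, then the far subtree consists of the cells of $\ch(S)_{b+1:a-1}$ together with edges of $p_a$ and $p_b$. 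This is the configuration where cells are far from their sites. The reverse holds for the other direction. Thus one orientation test $\mathrm{orient}(p_a,p_b,s)$ decides which section $\ch(S)_{\cdot}$ contains no defining points. The excluded section is the one whose cells make up the subtree not containing $z$: its points can be defining points only if $z$ lies on their cells, and $p_a,p_b$ are excluded from both sections.

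\textbf{Expected obstacle.} I expect the main difficulty to be justifying this correspondence between the subtree and the hull section rigorously. I will first try to argue it directly via \autoref{obs:neighbours}. Walking along the cell boundary of $p_a$ past $e$ in the direction away from the midpoint reaches edges defined by hull points on the corresponding side, by the clockwise ordering in \autoref{obs:order-of-edges}. Combined with the subtree connectivity of \autoref{obs:section-subtree}, this shows every cell of that section lies in that subtree. A degenerate case is when $s$ is a vertex on the boundary of $e$; there the subtree statement still holds by \autoref{lemma:optim-edge}. All computations are constant-size, giving $O(1)$ time.
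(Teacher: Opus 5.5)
Your route matches the paper's: test $s=\midmth(p_a,p_b)$, where your justification of that case is correct and more explicit than the paper's. Then use \autoref{obs:section-subtree} to see that removing $e$ separates the subtree of $\ch(S)_{a+1:b-1}$ from that of $\ch(S)_{b+1:a-1}$, and let \autoref{lemma:optim-edge} discard the subtree beyond $s$. The problem is the one place where you go beyond the paper: your explicit orientation rule is stated backwards, and that rule is the whole content of the decision. Write points of the bisector as $x=m+tu$, where $m=\midmth(p_a,p_b)$ and $u$ is the unit normal of $p_ap_b$ pointing from $m$ toward $s$. Then $d(x,q)^2-d(x,p_a)^2=|m-q|^2-|m-p_a|^2-2t\langle u,q-m\rangle$. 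So, moving in direction $u$, the edge $e$ can only be ended by a site $q$ with $\langle u,q-m\rangle<0$. The third defining point of the far endpoint of $e$ therefore lies on the side of line $p_ap_b$ \emph{opposite} to the direction of motion, and the far subtree belongs to the section on that opposite side. This is your own remark that ``cells are far from their sites'', but it contradicts the sentence just before it. If the motion enters the side not containing $p_{a+1},\dots,p_{b-1}$, the far subtree consists of the cells of $\ch(S)_{a+1:b-1}$, not of $\ch(S)_{b+1:a-1}$.

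Concretely, let $S=\{(-2,0),(0,3),(2,0),(0,-1)\}$, listed clockwise from the lexicographically smallest point, with $p_a=(-2,0)$ and $p_b=(2,0)$. Then $e=\{0\}\times[5/6,3/2]$; take $s=(0,1)$. Moving from $m=(0,0)$ toward $s$ enters the upper side, which contains $p_{a+1}=(0,3)$. Your rule (``the reverse holds'') therefore excludes $\ch(S)_{a+1:b-1}=\{(0,3)\}$. But $\sedmth(S)$ is the circumcircle of the acute triangle $(-2,0),(2,0),(0,3)$ with center $(0,5/6)$, so $(0,3)$ is a defining point. The far subtree hangs off the vertex $(0,3/2)$, whose third defining point is $(0,-1)$, so the section to exclude is $\ch(S)_{b+1:a-1}=\{(0,-1)\}$. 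Flipping the rule repairs the proof. The bisector computation above also gives a clean argument for the subtree-to-section correspondence you flagged as your expected obstacle, and it can replace the vague walk along the boundary of $\cellmth(p_a,S)$.
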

\begin{proof}
    We check $s = \midmth(p_a, p_b)$, if this holds then $s$ is the center of $\sedmth(S)$.
    By \autoref{obs:section-subtree} the sections $\ch(S)_{a+1:b-1},
    \ch(S)_{b+1:a-1}$ induce subtrees of $\fpvdmth(S)$.
    They are connected by $e$. By comparing the relative positions of $s$ and $\midmth(p_a, p_b)$ by \autoref{lemma:optim-edge} we infer which section does not contain defining points of $\sedmth(S)$.
\end{proof}
This allows us to determine $\sedmth(S)$ based on simple comparisons of midpoints to edges.
\autoref{fig:arrow-guidance} illustrates how every edge guides towards the center of $\sedmth(S)$.
We could develop a procedure that starts at an arbitrary vertex of 
$\fpvdmth(S)$ and traverses the diagram until it encounters the edge 
or vertex defined by the defining points of $\sedmth(S)$ in time $O(n)$.
However, we aim for a more efficient procedure.

\begin{figure}
    \centering
    \bigfpvexone{3}
    \hfill
    \bigfpvextwo{1}
    \caption{Farthest-point Voronoi diagrams, where $\sedmth(S)$ is defined by two points (left) or three points (right).
    The edges are annotated according to \autoref{lemma:optim-edge}.
    }
    \label{fig:arrow-guidance}
\end{figure}

To achieve this, we require a search data structure that allows us 
to discard entire subtrees of $\fpvdmth(S)$ efficiently.
One suitable data structure is the centroid decomposition,
which we introduce next.

Let $T$ be a tree of size $n$, then a \emph{centroid} of $T$ is a vertex $v$ of $T$, such that removing $v$ splits $T$ into subtrees having a maximum size of $\frac n 2$ each.
The existence of centroids was shown in 1869 by Jordan~\cite{jordanAssemblagesLignes1869}.
We obtain the \emph{centroid decomposition} $T_C$ of $T$ by removing a centroid $v$ from $T$ and selecting $v$ as the root of the tree $T_C$. 
By removing $v$ from $T$, we created disconnected subtrees of $T$.
We recursively obtain their centroid decompositions and make their roots the children of $v$ in $T_C$.
See \autoref{subfig:decomp} for an illustration.
The resulting tree $T_C$ has height $O(\log n)$. 
To construct a centroid decomposition, there exists a folklore 
$O(n \log n)$ algorithm and a linear-time algorithm~\cite{dellagiustinaNewLinearTimeAlgorithm2019}.
During preprocessing, after computing a farthest-point Voronoi diagram in linear time, we also compute its centroid decomposition in linear time.
Thus, the bound of \autoref{lemma:preprocessing} holds.

To efficiently find an edge or vertex $x$ of a farthest-point Voronoi Diagrams $\fpvdmth(S)$, we traverse its centroid decomposition $T_C$.
To ``navigate'' the decomposition, at a node $v$ of $T_C$ we need to decide which subtree contains $x$.
Given a node $v$ of $T_C$, which corresponds to a vertex of $\fpvdmth(S)$, an oracle $\mathcal O_x(v)$ selects the edge incident to $v$ in $\fpvdmth(S)$ that lies on the path toward $x$.
If $v = x$, the oracle reports success.
Using the oracle we can efficiently identify $x$.

\begin{figure}
  \includegraphics{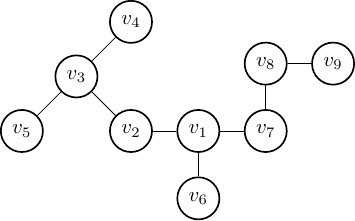}
  \hfill
  \includegraphics{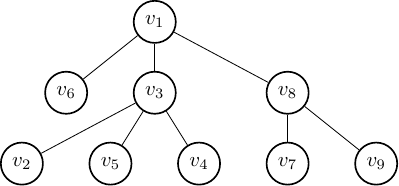}
  \caption{A tree $T$ with centroid $v_1$ (left), and its centroid decomposition $T_C$ (right).}
  \label{subfig:decomp}
\end{figure}

\begin{restatable}{lemma}{lemmaDecompSearch}\label{lemma:decompsearch}
    Given an oracle $\mathcal O_x(v)$ and a centroid decomposition of $\fpvdmth(S)$, we can find the vertex or edge $x$ of $\fpvdmth(S)$ with $O(\log n)$ invocations of $\mathcal O_x(v)$.
\end{restatable}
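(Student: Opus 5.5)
We descend the centroid decomposition $T_C$ from its root while maintaining the invariant that $x$ lies in the connected component of $\fpvdmth(S)$ represented by the current subtree of $T_C$. Here we view $\fpvdmth(S)$ as a tree $T$ of linear size; the half-infinite edges can be treated as edges to leaves at infinity, which does not change the argument. Every node $u$ of $T_C$ is the centroid of a connected component $T_u$ of $T$ that arises during the decomposition. The children of $u$ in $T_C$ are exactly the centroids of the components of $T_u \setminus \{u\}$, and each such component is attached to $u$ by exactly one edge of $T$ incident to $u$.

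\textbf{Invariant and step.} Initially $u$ is the root and $T_u = T$, so $x \in T_u$ trivially. At a node $u$ we query $\mathcal O_x(u)$.
\begin{itemize}
\item If it reports success, then $x = u$ and we are done.
\item Otherwise it returns an edge $e = \{u, w\}$ of $T$ on the path from $u$ toward $x$.
\end{itemize}
If $x = e$ itself, we must recognise this. We do so by also querying $\mathcal O_x(w)$, or equivalently by noting that the path from $w$ toward $x$ leads back through $e$. If that query also does not report success, then $x$ lies in the component of $T_u \setminus \{u\}$ containing $w$. This holds because $T_u$ is connected, $x \in T_u$, and the unique path from $u$ to $x$ in the tree $T$ stays in $T_u$, since $T_u$ is a subtree containing both endpoints. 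We then move to the child of $u$ in $T_C$ whose component contains $w$.

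\textbf{Finding that child.} We store with every edge incident to $u$ a pointer to the corresponding child of $u$ in $T_C$. These pointers are computed during the construction of $T_C$, so the move costs $O(1)$.

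\textbf{Counting invocations.} Each step makes $O(1)$ oracle invocations and descends one level in $T_C$. Since $T_C$ has height $O(\log n)$, we need $O(\log n)$ invocations in total. The descent must terminate: once $T_u$ is a single vertex, the invariant forces $x = u$, or $x$ is an edge incident to $u$, which the edge check detects.

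\textbf{Main obstacle.} The delicate part is the case where $x$ is an edge rather than a vertex, and in particular a half-infinite edge whose unbounded end is not a vertex of $T_C$. I would handle this by detecting $x = e$ through oracle answers at both endpoints: the oracles at $u$ and $w$ point toward each other along $e$. For a half-infinite edge, the single finite endpoint whose oracle returns that edge already identifies $x$. This uses only a constant number of extra calls, so the $O(\log n)$ bound is preserved.
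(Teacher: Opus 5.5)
Your proof is correct and follows the same argument as the paper: descend the centroid decomposition $T_C$ with one oracle-guided step per level, so its $O(\log n)$ height gives $O(\log n)$ invocations. The differences are only in how the delicate edge case and the child lookup are handled. You spend one extra call $\mathcal O_x(w)$ at the other endpoint to recognise $x=e$ immediately, which keeps $x$ entirely inside the current component. The paper makes no extra calls: it concludes $e=x$ when $e$ is half-infinite, or when no child of the current centroid lies beyond $e$, since the other endpoint is then an already-visited centroid whose own oracle call must also have returned $e$. It also locates the child by testing the children's defining points against $\ch(S)_{a:b}$, not by stored pointers.
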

\begin{proof}
    Let $T_C$ be a centroid decomposition of $\fpvdmth(S)$.
    We traverse $T_C$ from its root towards a leaf.
    When visiting a node $v$, we invoke $\mathcal O_x(v)$, which reports success or yields an edge $e$ incident to $v$ in $\fpvdmth(S)$.
    Unless $\mathcal O_x(v)$ reported success or $e$ is a half-infinite edge, in which case $e = x$, we need to identify which child $w_1, \dots, w_r$ of $v$ in $T_C$ corresponds to the subtree indicated by $e$.
    Note that $r=O(1)$ for graphs of bounded degree, such as farthest-point Voronoi diagrams in general position.

    Let $p_a, p_b, p_c$ be the defining points of $v$ in clockwise order and $u$ be the other vertex of $\fpvdmth(S)$ incident to $e$.
    W.l.o.g.\@ we assume that $p_a$ and $p_b$ define $e$, thus $e$ leads to the subtree corresponding to section $\ch(S)_{a:b}$.
    For every child $w_i$ we check if the points $q_a, q_b, q_c \in S$ defining vertex $w_i$ in $\fpvdmth(S)$ lie in section $\ch(S)_{a:b}$.
    If such a child $w_i$ exists, we continue the traversal by choosing $w_i$.
    If there is no such child, $u$ is a centroid we removed during the construction of an intermediate centroid decomposition $T_C'$ and visited earlier.
    This is illustrated by \autoref{subfig:decomp}, if $\mathcal O_x(v_2) = \{v_1, v_2\}$ then there is no subtree containing $v_1$.
    The corresponding call $\mathcal O_x(u)$ must have yielded $e$ as well, since we traversed into the subtree containing the vertex $v$ incident to $e$.
    Hence, we can conclude $e = x$.
    Since the height of the centroid decomposition is $O(\log n)$, we invoke the oracle $O(\log n)$ times.
\end{proof}

Finding the smallest enclosing disk of one set of points is now straightforward, as shown in the proof of the following theorem.
\begin{theorem}\label{theorem:singlesearch}
    Given a set of points $S$ and their farthest-point Voronoi diagram in addition to its centroid decomposition, we can find the smallest enclosing disk of \,$S$ in time $O(\log |S|)$.
\end{theorem}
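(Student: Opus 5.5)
We will instantiate the oracle $\mathcal O_x(v)$ of \autoref{lemma:decompsearch}, taking $x$ to be the vertex or edge of $\fpvdmth(S)$ that contains the center $z$ of $\sedmth(S)$; by the Fact and the subsequent discussion such an $x$ exists. The theorem then follows once we show that a single oracle call costs $O(1)$ time. Because the centroid decomposition has height $O(\log |S|)$, this gives $O(\log |S|)$ calls. Each step of the traversal in \autoref{lemma:decompsearch} besides the oracle call, namely checking whether the defining points of the constantly many children lie in a given section, also takes $O(1)$: it amounts to comparing indices along $\ch(S)$, and we store each point's index during preprocessing.

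The oracle at a vertex $v$ with defining points $p_a,p_b,p_c$ works as follows. By general position $v$ has exactly three incident edges $e_{ab},e_{bc},e_{ca}$. For each edge $e_{ij}$ we apply \autoref{corr:optim-edge-decision} to the triple $(v,p_i,p_j)$, which is valid since $v\in e_{ij}$. If some midpoint $\midmth(p_i,p_j)$ lies on $e_{ij}$ and equals the position being tested, we are done. Otherwise each edge tells us whether $z$ lies in the subtree across $e_{ij}$ or not, which is the same as asking whether $\midmth(p_i,p_j)$ lies on the far side of $v$ along the bisector line. By \autoref{lemma:optim-edge}, an edge whose midpoint lies on the $v$-side or behind $v$ cannot lead to $z$. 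If every edge points back toward $v$, then $z=v$ and the oracle reports success; this is the case where the non-obtuse triangle $p_ap_bp_c$ defines $\sedmth(S)$. If some edge $e_{ij}$ points away from $v$, the oracle returns $e_{ij}$. We must also handle edges that contain $z$ in their interior, where $z=\midmth(p_a,p_b)$; in that case $x=e$ itself. The traversal of \autoref{lemma:decompsearch} detects this when the returned edge is half-infinite or when its other endpoint was already visited. At that point we output the disk centered at $\midmth(p_i,p_j)$ with radius $d(p_i,\midmth(p_i,p_j))$. For the vertex case, the output is the circumcircle of $p_a,p_b,p_c$.

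The main obstacle is consistency: at most one incident edge may point away from $v$, for otherwise the oracle would be ill-defined. This follows from uniqueness of the tree path to $z$ together with \autoref{lemma:optim-edge}. That lemma shows every edge not on the path from $v$ to $z$ points toward $v$, and the edge on the path points away unless $z=v$. For this reason I would base the argument on \autoref{lemma:optim-edge} applied at $s=v$, rather than reproving it. The one special case is $v=\midmth(p_i,p_j)$, which general position excludes for a vertex. With $O(1)$ time per oracle call and $O(\log|S|)$ calls, the total time is $O(\log |S|)$.
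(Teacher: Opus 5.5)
Your proposal is correct and follows essentially the same route as the paper: both run the centroid-decomposition search of \autoref{lemma:decompsearch} with an $O(1)$ oracle that applies \autoref{corr:optim-edge-decision} at $s=v$ to the three incident edges. Success is reported exactly when $p_ap_bp_c$ is non-obtuse (your ``all edges point back or $v$ is a midpoint'' test is equivalent to the paper's explicit non-obtuse check), otherwise the oracle follows the unique remaining edge, and your tree-path argument supplies the consistency the paper leaves implicit. One harmless slip: the assumption of no three collinear and no four cocircular points does not exclude $v=\midmth(p_i,p_j)$, which occurs whenever $p_ap_bp_c$ is a right triangle, but your midpoint check already treats that case as success.
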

\begin{proof}
    Let $x$ be the vertex of $\fpvdmth(S)$ corresponding to the center $z$ of $\sedmth(S)$, in case of three defining points, or be an edge of $\fpvdmth(S)$ containing $z$ in case of two defining points.
    We use the procedure of \autoref{lemma:decompsearch} and provide an oracle $\mathcal O_x(v)$.
    The procedure implicitly maintains a subtree $T$ of $\fpvdmth(S)$ and discards parts of $T$ according to $\mathcal O_x(v)$. At any step $T$ contains $x$.
    Let $v$ be a vertex given to $\mathcal O_x(v)$ and $p_a, p_b, p_c$ be its defining points.
    We first check if $p_a, p_b, p_c$ form a \mbox{non-obtuse} triangle, in which case we report success, since no smaller disk can cover $p_a, p_b, p_c$.

    Otherwise, we analyze the incident edges.
    The points that define $\sedmth(S)$ define an edge or vertex, and
    hence lie in exactly one of the sections $\ch(S)_{a:b},
    \ch(S)_{b:c}, \ch(S)_{c:a}$.
    
    In case the defining points lie in $\ch(S)_{a:b}$, let $e$ be the corresponding incident edge defined by $p_a$ and $p_b$, as illustrated by \autoref{fig:vertex-sketch}.
    Take the point $s = v$, which lies on $e$.
    By analyzing the relative position of $s$ relative to $\midmth(p_a, p_b)$ by \autoref{corr:optim-edge-decision} we get an indication that $\ch(S)_{b+1:a-1}$ cannot contain defining points of $\sedmth(S)$.

    In general, we analyze all incident edges as described and can conclude that two out of the three sections $\ch(S)_{a+1:b-1}, \ch(S)_{b+1:c-1}, \ch(S)_{c+1:a-1}$ cannot contain defining points of $\sedmth(S)$.
    Hence, we select the incident edge corresponding to the remaining section.

    The points that define the vertex or edge $x$ that the procedure of \autoref{lemma:decompsearch} returns are the defining points of $\sedmth(S)$.
\end{proof}

\begin{figure}
    \centering
    \includegraphics*[trim=15 55 40 65]{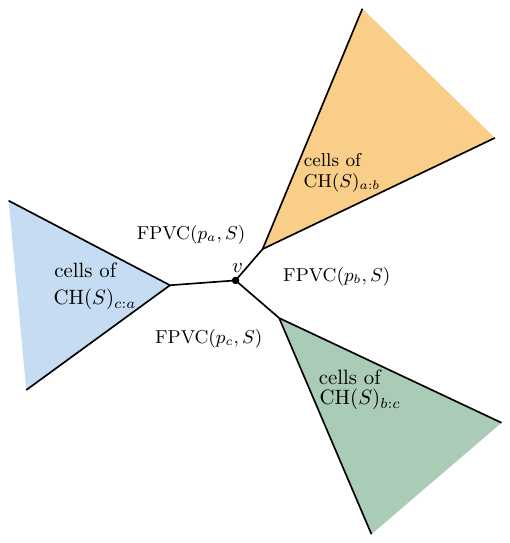}
    \caption{Conceptual sketch of a vertex $v$ of $\fpvdmth(S)$, the incident edges and the relative locations of cells.}\label{fig:vertex-sketch}
\end{figure}

\subsection{Determining the smallest enclosing disk for multiple
  sets}\label{subsec:decomp-multi-sets}

We develop a search procedure to identify the defining points contained by a given canonical section, which we invoke once for every canonical section.
We first consider the overall goal of our search procedure, \ie towards which parts of the diagrams we move.
Then we show how we guide the search procedure.

Let $G$ (``Goal'') be the set of points that define $\sedmth(S \cap Q)$.
For every canonical section $\ch(S_i)_{l:l'}$ of the convex hull we find an edge or vertex $x$ of $\fpvdmth(S_i)$ that is adjacent to the cells of $G \cap \ch(S_i)_{l:l'}$.
Whether $x$ is an edge or vertex depends on the size of $G \cap \ch(S_i)_{l:l'}$.
\[x = 
\begin{cases}
\text{an edge adjacent to } \cellmth(p_1, S_i),
& \text{if } G \cap \ch(S_i)_{l:l'} = \{p_1\}, \\
\text{the edge defined by } p_1 \text{ and } p_2,
& \text{if } G \cap \ch(S_i)_{l:l'} = \{p_1, p_2\}, \\
\text{the vertex defined by these three points},
& \text{if } |G \cap \ch(S_i)_{l:l'}| = 3.
\end{cases}
\]
In case of $|G \cap \ch(S_i)_{l:l'}| = 0$, the procedure either detects that the canonical section does not contain any defining points and aborts or returns an edge defined by the point of $\ch(S_i)_{l:l'}$ that admits the smallest enclosing disk having a point of $\ch(S_i)_{l:l'}$ on its boundary.
Once the search procedures for all $O(m)$ canonical sections is finished we gather all $O(m)$ defining points and determine the smallest enclosing disk with a linear-time algorithm.

To implement the search procedure, given a canonical section $\ch(S_i)_{l:l'}$, we apply the procedure of \autoref{lemma:decompsearch} to $\fpvdmth(S_i)$ and provide an oracle $\mathcal O_x(v)$.
Again, the procedure implicitly maintains a subtree $T$ of $\fpvdmth(S_i)$ and discards parts of $T$ according to the oracle $\mathcal O_x(v)$.
We assume $|\ch(S_i)_{l:l'}| > 2$, otherwise we can immediately return the edge defined by the point(s) of $\ch(S_i)_{l:l'}$.

Let $v$ be the vertex of $\fpvdmth(S_i)$ given to $\mathcal O_x(v)$ and $p_a, p_b, p_c \in S_i$ be the defining points of $v$ in clockwise order.
By \autoref{obs:section-subtree} the section $\ch(S_i)_{l:l'}$ induces a subtree $T$ in $\fpvdmth(S_i)$ that contains $v$.
Parts of $T$ are also present in $\fpvdmth(S \cap Q)$ as illustrated by \autoref{fig:sq-ch-subtrees}, since $\ch(S_i)_{l:l'}$ is a canonical section.
In time $O(m \log n)$ we decide if $v$ is also a vertex of $\fpvdmth(S \cap Q)$ by checking if no point is farther from $v$ than $p_a, p_b, p_c$ with queries in $\fpvdmth(S_1), \dots, \fpvdmth(S_m)$.
\begin{figure}
    \newcommand{\subtreefigure}[1]{\includegraphics*[trim = 45 20 55 65, page=#1]{subtrees.pdf}}
    \centering
    \subtreefigure{1}
    \hfill
    \subtreefigure{2}
    \caption{Diagram $\fpvdmth(S_i)$ (left) with a subtree induced by $\ch(S_i)_{l:l'}$ (green), parts of that tree are also present in $\fpvdmth(S \cap Q)$ (right).}\label{fig:sq-ch-subtrees}
\end{figure}

If $v$ is a vertex of $\fpvdmth(S \cap Q)$, then this implies that
$p_a, p_b, p_c$ also define incident edges of $v$ in $\fpvdmth(S \cap Q)$.
By comparing $v$ to the pairwise midpoints of $p_a, p_b, p_c$ by \autoref{corr:optim-edge-decision}, we identify which section of $\ch(S_i)_{a:b}, \ch(S_i)_{b:c}, \ch(S_i)_{c:a}$ may contain points of $G$ as in the proof of \autoref{theorem:singlesearch}.
If this section is disjoint with $\ch(S_i)_{l:l'}$ we abort the search procedure, otherwise
we resolve the call to the oracle $\mathcal O_x(v)$ with the corresponding edge.

If $v$ does not exist in $\fpvdmth(S \cap Q)$, then the incident edges defined by $p_a, p_b, p_c$ do not exist, as illustrated by \autoref{fig:diagramlayout}.
In this case the cells $\cellmth(p_a, S \cap Q), \cellmth(p_b, S \cap
Q)$ and $\cellmth(p_c, S \cap Q)$ clearly cannot meet at $v$, they are
bounded by edges defined by points not in $S_i$ (shown in red in the figure).
\newcommand{\diagramlayoutpicture}[1]{\includegraphics*[page = #1, trim = 87 80 83 57, width = \linewidth]{diagramlayout}}%
\begin{figure}
    \centering
    \subcaptionbox{Diagram $\fpvdmth(S_1)$ and points $S_1$.}
    [.45\textwidth]{\diagramlayoutpicture{1}}
    \hfill
    \subcaptionbox{Diagram $\fpvdmth(S \cap Q)$ and points $S \cap Q$. }
    [.45\textwidth]{\diagramlayoutpicture{2}}
    \caption{Comparison of cells of points of $S_1$ in $\fpvdmth(S_1)$ and $\fpvdmth(S \cap Q)$. The cells of the green points ($\ch(S_1)_{a+1:b-1}$), orange ($\ch(S_1)_{b+1:c-1}$) and blue points ($\ch(S_1)_{c+1:a-1}$) are shaded in their respective colors.
    In the latter diagram the red edges divide the cells of points of $S_1$.
    }\label{fig:diagramlayout}
\end{figure}%
Since $\ch(S \cap Q)$ contains points of other sets, the canonical section may not include all points of $S_i$.
In particular, it may not include all defining points $p_a, p_b, p_c$.
We distinguish cases, depending on how many of the defining points and their neighbors are contained in $\ch(S_i)_{l:l'}$, \ie on the number $k = \left|\left\{p_j \in \{p_a, p_b, p_c\}\mid \{p_{j-1}, p_{j}, p_{j+1}\} \subseteq \ch(S_i)_{l:l'} \right\}\right|$.

In case $k = 3$ we assume w.l.o.g.\@ that the section $\ch(S_i)_{a-1:c+1}$ is contained in the canonical section, \ie $\ch(S_i)_{a-1:c+1} \subseteq \ch(S_i)_{l:l'}$, otherwise we cyclically relabel $p_a, p_b, p_c$ accordingly.
We have the sequence
\[\ch(S_i)_{l:l'} = \{\underbrace{p_l, \dots, p_{a-1}}_{\ch(S_i)_{l:a-1}}, p_a, \underbrace{p_{a+1}, \dots, p_{b-1}}_{\ch(S_i)_{a+1:b-1}}, p_b, \underbrace{p_{b+1}, \dots, p_{c-1}}_{\ch(S_i)_{b+1:c-1}}, p_c, \underbrace{p_{c+1}, \dots, p_{l'}}_{\ch(S_i)_{c+1:l'}} \}.\]
For $p_a, p_b, p_c$ define $\prevmth(\cdot)$ and $\nextmth(\cdot)$ as all points that are before or after in the sequence relative to $p_a, p_b, p_c$ respectively, \ie $\prevmth(p_b) := \ch(S_i)_{l:b-1}$ and $\nextmth(p_b) := \ch(S_i)_{b+1:l'}$. 
We will show that $\cellmth(p_a, S \cap Q)$ has at least one \emph{separating} edge $e_a$, \ie an edge defined in conjunction with a point $p_{a'} \in \ch(S \cap Q)$ not lying in $\ch(S_i)_{l:l'}$.
We define separating edges analogously for $p_b, p_c$.
In \autoref{fig:diagramlayout} separating edges were colored red.
\begin{lemma}\label{lemma:sep-edge}${}$\\
\begin{itemize}
    \item[i)] The points $p_a$, $p_b$ and $p_c$ each have at least one separating edge.
    \item[ii)] Removing a separating edge of $p_t \in \{p_a, p_b, p_c\}$ from $\fpvdmth(S
      \cap Q)$ disconnects the cells of $\nextmth(p_t)$ from those of $\prevmth(p_t)$.
\end{itemize}
\end{lemma}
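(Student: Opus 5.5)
The plan is to work entirely inside $\fpvdmth(S \cap Q)$ and use the fact that we are in the case where $v$ is a vertex of $\fpvdmth(S_i)$ but not of $\fpvdmth(S \cap Q)$, with $k=3$. Then $\ch(S_i)_{a-1:c+1} \subseteq \ch(S_i)_{l:l'}$, so all of $p_a,p_b,p_c$ and their $S_i$-neighbours are consecutive points of $\ch(S \cap Q)$ inside one canonical section. I would prove ii) first and derive i) from it, or rather from the structure that ii) exposes.

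\textbf{Part ii).} Let $e$ be a separating edge of $p_t$, defined by $p_t$ and some $p_{t'} \in \ch(S\cap Q)\setminus \ch(S_i)_{l:l'}$. Apply \autoref{obs:section-subtree} to $\fpvdmth(S\cap Q)$ and the two sections of $\ch(S\cap Q)$ obtained by cutting the cyclic order at $p_t$ and $p_{t'}$. These are the points strictly between $p_t$ and $p_{t'}$ clockwise, and those strictly between $p_{t'}$ and $p_t$.
\begin{itemize}
\item The edge $e$ is the unique edge joining the two induced subtrees, as in the proof of \autoref{corr:optim-edge-decision}. So removing $e$ separates the cells on the two sides, apart from the cells of $p_t$ and $p_{t'}$ themselves, which sit on both sides of $e$.
\item Since $\ch(S_i)_{l:l'}$ is a contiguous section of $\ch(S\cap Q)$ that contains $p_t$ but not $p_{t'}$, the set $\nextmth(p_t)$ lies entirely on one side (between $p_t$ and $p_{t'}$ clockwise) and $\prevmth(p_t)$ lies entirely on the other.
\end{itemize}
This gives ii). I will state the claim carefully: every path in the tree from a cell of $\nextmth(p_t)$ to a cell of $\prevmth(p_t)$ uses $e$.

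\textbf{Part i).} I would show that some edge of $\cellmth(p_t, S\cap Q)$ is defined by a point outside the canonical section. Suppose, for contradiction, that every edge of $\cellmth(p_t,S\cap Q)$ is defined by points of $\ch(S_i)_{l:l'}$. The first and last edges are defined by the hull neighbours $p_{t-1},p_{t+1}$ (\autoref{obs:neighbours}), and these lie in $S_i$ because $k=3$. All edges are then defined by points of $S_i$, in clockwise order by \autoref{obs:order-of-edges}, so the cell $\cellmth(p_t,S\cap Q)$ is an intersection of halfplanes $H(p_t,q)$ with $q\in S_i$ only.

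Next I compare with $\cellmth(p_t,S_i)$, which is the intersection over all $q \in S_i$. The intersection over the defining neighbours contains $\cellmth(p_t,S_i)$. Conversely, $\cellmth(p_t,S\cap Q)\subseteq \cellmth(p_t,S_i)$ holds always. I then want to argue that the cell's boundary structure forces $\cellmth(p_t,S\cap Q)=\cellmth(p_t,S_i)$. The route is: a cell is determined by the halfplanes of its actual edges, so $\cellmth(p_t,S\cap Q)$ equals the intersection of its edge-halfplanes, all from $S_i$, and this contains $\cellmth(p_t,S_i)$.

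Consequently $v\in \cellmth(p_t,S_i)=\cellmth(p_t,S\cap Q)$ for this $t$. If this held for all three of $p_a,p_b,p_c$, then $v$ would be a common point of their three cells in $\fpvdmth(S\cap Q)$, hence a vertex there, a contradiction. So at least one of them has a separating edge.

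\textbf{Main obstacle.} I expect the hard step to be upgrading ``at least one'' to ``each'' of $p_a,p_b,p_c$. For a single $p_t$ with $\cellmth(p_t,S\cap Q)=\cellmth(p_t,S_i)$, I would argue as follows. The cells of the neighbours $p_{t\pm1}$ then share full edges of $\cellmth(p_t,S_i)$. Walking the tree from $\cellmth(p_t,\cdot)$ towards $v$ through the subtree induced by $\ch(S_i)_{a-1:c+1}$, I would use \autoref{obs:fpv-vertex-rays} and \autoref{obs:fpvrays}: the rays from $v$ away from $p_a,p_b,p_c$ bound regions, and the missing vertex $v$ means some outside point $p'$ is farther from $v$ than $p_a,p_b,p_c$. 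Its cell contains $v$ and, by \autoref{obs:fpvrays}, the ray from $v$ away from $p'$. I would try to show that this cell must touch all three cells of $p_a,p_b,p_c$ near $v$, which yields separating edges for each. If that fails, I would fall back on a local argument around $v$ using the three wedges cut out by the rays.
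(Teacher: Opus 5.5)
Your part ii) is sound. Cutting $\ch(S\cap Q)$ at $p_t$ and $p_{t'}$ and using that $e$ is the only edge joining the two induced subtrees is a valid alternative to the paper's argument, which uses the boundary path of $\cellmth(p_t,S\cap Q)$ from $e_1$ to $e_r$. The core of your part i) is also correct: if every edge of $\cellmth(p_t,S\cap Q)$ were defined by a point of $\ch(S_i)_{l:l'}$, the cell would be the intersection of those halfplanes and hence would contain $\cellmth(p_t,S_i)\ni v$.

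The gap is in what you conclude from this. You only derive that \emph{at least one} of $p_a,p_b,p_c$ has a separating edge. The upgrade to \emph{each} is left as an unresolved ``main obstacle''. Your plan for it (showing that the cell of a single outside point $p'$ touches all three cells near $v$) is much stronger than needed, and you do not carry it out. As written, part i) is therefore not proved.

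The missing observation is one line. Since $v$ is the vertex of $\fpvdmth(S_i)$ defined by $p_a,p_b,p_c$, we have $d(v,p_a)=d(v,p_b)=d(v,p_c)$. So $v\in\cellmth(p_t,S\cap Q)$ for a \emph{single} $t$ already means $p_t$, and hence all three points, are farthest from $v$ among $S\cap Q$. Then $v$ lies in all three cells of $\fpvdmth(S\cap Q)$. These cells are subsets of the corresponding cells of $\fpvdmth(S_i)$, whose common intersection is $\{v\}$, so $v$ is a vertex of $\fpvdmth(S\cap Q)$, a contradiction.

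Equivalently, $v$ not being a vertex means some point of $S\cap Q$ is strictly farther from $v$ than $p_a,p_b,p_c$; this is exactly the test performed before the lemma. Hence $v\notin\cellmth(p_t,S\cap Q)$ for every $t\in\{a,b,c\}$ separately. With this, your contradiction argument runs for each of $p_a,p_b,p_c$ individually, and nothing about the outside cells around $v$ is needed. This is also what lies behind the paper's terse ``otherwise $v$ would also be a vertex''. The paper then combines it with \autoref{obs:order-of-edges} to locate the separating edges as the contiguous block $e_j,\dots,e_{j'}$ of the cell boundary.
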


\begin{proof}
We prove the lemma for $p_b$. The other cases are analogous.

\subparagraph{i)}
Consider the sequence of edges $e_1, \dots, e_r$ 
bounding $\cellmth(p_b, S \cap Q)$ in clockwise order.
Let $q_1, \dots, q_r \in S \cap Q$ be the points that define $e_1, \dots, e_r$ respectively in conjunction with $p_b$.
By \autoref{obs:order-of-edges} the sequence $q_1, \dots, q_r$ begins with points of $\nextmth(p_b)$ and ends with points of $\prevmth(p_b)$, \ie $q_1 = p_{b+1} \in \nextmth(p_b)$ and $q_r = p_{b-1} \in \prevmth(p_b)$.
Let $q_{j}$ be the first point such that $q_{j} \notin \nextmth(p_b)$.
Such a point and corresponding edge has to exist, since otherwise $v$ would also be a vertex in $\fpvdmth(S \cap Q)$.
Similarly, let $q_{j'}$ be the last point such that $q_{j'} \notin \prevmth(p_b)$.
Note that $1 < j$ and $j' < r$ since $q_1 \in \nextmth(p_b)$ and $q_r \in \prevmth(p_b)$.
Now $j, j'$ induce three parts of the sequence: A prefix $\{q_1, \dots,
q_{j-1}\} \subseteq \nextmth{p_b}$, then $q_{j}, \dots, q_{j'}$ and
finally a suffix $\{q_{j'+1}, \dots, q_r\} \subseteq \prevmth{p_b}$.

We show that $\ch(S_i)_{l:l'}$ contains none of the points $q_{j}, \dots, q_{j'}$.
Assume one of the points, say $q_{j+1}$, lies in $\nextmth(p_b)$.
Since $q_{j} \notin \nextmth(p_b)$ and $q_1 \in \nextmth(p_b)$ this would be a contradiction to \autoref{obs:order-of-edges}.
Symmetrically $q_{j+1} \in \prevmth(p_b)$ cannot hold. 
Obviously $q_{j+1} \ne p_b$ holds as well, hence $q_{j+1} \notin \ch(S_i)_{l:l'}$.
Thus, any edge $e_s$ of $e_{j}, \dots, e_{j'}$ is a separating edge of $p_b$.

\subparagraph{ii)} Let $e_{s}$ be any edge of $e_{j}, \dots, e_{j'}$.
Removing $e_{s}$ from $\fpvdmth(S \cap Q)$ creates two subtrees.
The edges $e_1$ and $e_r$ cannot lie in the same subtree, since we removed $e_{s}$ which was part of the path from $e_1$ to $e_r$.
Let $T_1$ denote the subtree containing $e_1$ and $T_r$ denote the subtree containing $e_r$.
The section $\nextmth(p_b)$ by \autoref{obs:section-subtree} induces a subtree in $\fpvdmth(S \cap Q)$, since the point $p_{b+1} \in \nextmth(p_b)$ defines $e_1$, this subtree is contained in $T_1$, 
Similarly, the induced subtree of $\prevmth(p_b)$ lies in $T_r$.
We conclude that splitting $\fpvdmth(S \cap Q)$ on any of separating edge of $p_b$, disconnects the cells of $\prevmth(p_b)$ from those of $\nextmth(p_b)$.
\end{proof}
Let $e_a, e_b$ and $e_c$ be separating edges of $p_a, p_b$ and $p_c$,
respectively, and let $p_{a'}, p_{b'}, p_{c'}$ be the other defining
points, respectively, hence $p_{a'}, p_{b'}, p_{c'} \notin
\ch(S_i)_{l:l'}$.
By \autoref{corr:optim-edge-decision} an edge $e_j \in \{e_a, e_b, e_c\}$ with defining point $p_{j'}$ indicates $\ch(S \cap
Q)_{j+1:j'-1} \cap G = \emptyset$ or $\ch(S \cap Q)_{j'+1:j-1} \cap G
= \emptyset$ (otherwise we have found the center of $\sedmth(S \cap Q)$ and terminate the search procedure).
Since $\nextmth(p_j)
\subseteq \ch(S \cap Q)_{j+1:j'-1}$ and $\prevmth(p_j) \subseteq \ch(S \cap Q)_{j'+1:j-1}$ it follows that either $\nextmth(p_j) \cap G = \emptyset$ or $\prevmth(p_j) \cap G = \emptyset$. 

We use $\leftarrow_j$ if by indication of $e_j$ we know that $\nextmth(p_j)$ cannot contain points of $G$ and $\rightarrow_j$ in the other case.
Now consider the possible outcomes if we analyze $e_a, e_b, e_c$ this way:
\begin{itemize}
    \item Case $\leftarrow_a, \leftarrow_b, \leftarrow_c$: Then $\ch(S_i)_{a+1:l'} \cap G = \emptyset$
    \item Case $\rightarrow_a, \leftarrow_b, \leftarrow_c$: Then $\ch(S_i)_{l:a-1} \cap G = \emptyset$ and $\ch(S_i)_{b+1:l'} \cap G = \emptyset$
    \item Case $\rightarrow_a, \rightarrow_b, \leftarrow_c$: Then $\ch(S_i)_{l:b-1} \cap G = \emptyset$ and $\ch(S_i)_{c+1:l'} \cap G = \emptyset$
    \item Case $\rightarrow_a, \rightarrow_b, \rightarrow_c$: Then $\ch(S_i)_{l:c-1} \cap G = \emptyset$
\end{itemize}
Unlisted cases, such as $\leftarrow_a \rightarrow_b \rightarrow_c$ would be a contradiction to \autoref{corr:optim-edge-decision}, since $\leftarrow_a$ implies $\leftarrow_b$ and $\leftarrow_c$ due to $\nextmth(p_a) \supseteq \nextmth(p_b) \supseteq \nextmth(p_c)$.
For every case, we can conclude that only the respective ``leftover'' section $\ch(S_i)_{l:a}, \ch(S_i)_{a:b}, \ch(S_i)_{b:c}$ or $\ch(S_i)_{c:l'}$ possibly contains points of $G$.
Note that there is exactly one corresponding edge of $\fpvdmth(S_i)$ incident to $v$ for each of these sections, as illustrated by \autoref{fig:vertex-sketch}.
The edge defined by $p_a$ and $p_b$ corresponds to $\ch(S_i)_{a:b}$.
The edge defined by $p_a$ and $p_c$ corresponds to $\ch(S_i)_{l:a}$ and $\ch(S_i)_{c:l'}$ since both sections are subsets of $\ch(S_i)_{c:a}$.
Let $e$ be the edge corresponding to the section we identified.
Since $x$ is incident to a cell of a point of $G$ and hence in the subtree reachable with $e$, we select $e$ as an answer for $\mathcal O_x(v)$.

For $k \le 2$,
we must only determine and analyze $k$ separating edges.
Based on the indication given by \autoref{corr:optim-edge-decision} we identify which section $\ch(S_i)_{a:b}, \ch(S_i)_{b:c}, \ch(S_i)_{c:a}$ may contain points of $\ch(S_i)_{l:l'} \cap G$ and select the corresponding edge.
    \begin{itemize}
        \item Case $k = 2$:
        We assume w.l.o.g. $\ch(S_i)_{a-1:b+1} \subseteq \ch(S_i)_{l:l'}$, otherwise relabel $p_a, p_b, p_c$.
        Thus, we have \[\ch(S_i)_{l:l'} = \{\underbrace{p_l, \dots, p_{a-1}}_{\ch(S_i)_{l:a-1}}, p_a, \underbrace{p_{a+1}, \dots, p_{b-1}}_{\ch(S_i)_{a+1:b-1}}, p_b, \underbrace{p_{b+1}, \dots, p_{l'}}_{\ch(S_i)_{b+1:l'}} \}.\]
        By finding a separating edge of $p_a$ and $p_b$, by \autoref{lemma:sep-edge} and \autoref{corr:optim-edge-decision} we can infer which of the sections $\ch(S\cap Q)_{l:a}$, $\ch(S\cap Q)_{a:b}$, $\ch(S\cap Q)_{b:l'}$ possibly contains points of $G$.
        Note that this time $\ch(S_i)_{b:l'} \subseteq \ch(S_i)_{b:c}$ and hence the edge defined by $p_b, p_c$ corresponds to $\ch(S_i)_{b:l'}$.
        The other two edges, the one defined by $p_a, p_b$ and the one defined by $p_c, p_a$ correspond to $\ch(S_i)_{a:b}$ and $\ch(S_i)_{l:a}$ similar to before. We yield the corresponding edge of $\fpvdmth(S_i)$ incident to $v$ as an answer for $\mathcal O_x(v)$.
        \item Case $k = 1$:
        We ensure that $\{p_{b-1}, p_b, p_{b+1}\} \subseteq \ch(S_i)_{l:l'}$ holds by relabeling $p_a, p_b, p_c$ just as before.
        We have
        \[\ch(S_i)_{l:l'} = \{\underbrace{p_l, \dots, p_{b-1}}_{\ch(S_i)_{l:b-1}}, p_b, \underbrace{p_{b+1}, \dots, p_{l'}}_{\ch(S_i)_{b+1:l'}}\}.\]
        In this case we have $\ch(S_i)_{l:b-1} \subseteq \ch(S_i)_{a:b}$ and $\ch(S_i)_{b+1:l'} \subseteq \ch(S_i)_{b:c}$.
        To decide which section can contain defining points it suffices to analyze the separating edge of $p_b$.
        Accordingly, we either return the edge defined by $p_a$ and $p_b$ or the edge defined by $p_b$ and $p_c$.
        \item Case $k = 0$: Since $|\ch(S_i)_{l:l'} > 2|$, the canonical section is a subset of exactly one of the sections $\ch(S_i)_{c:a}, \ch(S_i)_{a:b}, \ch(S_i)_{b:c}$. We return the corresponding incident edge.
    \end{itemize}

We have $O(m)$ canonical sections, requiring $O(\log n)$ invocations of
$\mathcal O_x(v)$ by \autoref{lemma:decompsearch}.
Every invocation involves analyzing up to three seperating edges.
Hence, we obtain the following lemma.
In the next section, we will then show how to find a seperating edge.

\begin{lemma}
    By analyzing $O(m \log n)$ separating edges we can determine $\sedmth(S \cap Q)$.
\end{lemma}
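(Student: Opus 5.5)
The plan is to assemble the pieces already developed in this section into a counting argument together with a correctness argument. By \autoref{lemma:preprocessing} we obtain the $m = O(\log n)$ canonical sets and, in $O(m^2\log n)$ time, the $O(m)$ canonical sections of $\ch(S\cap Q)$ from \autoref{lemma:ch-bound}. For each canonical section $\ch(S_i)_{l:l'}$ we run the centroid-decomposition search of \autoref{lemma:decompsearch} on $\fpvdmth(S_i)$, using the oracle $\mathcal O_x(v)$ just described. Sections of size at most two are handled directly. The search needs $O(\log n)$ oracle calls, and each call analyzes at most $k \le 3$ separating edges via \autoref{corr:optim-edge-decision}. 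The test of whether $v$ is a vertex of $\fpvdmth(S\cap Q)$ involves no separating edges. Summed over $O(m)$ sections, this gives $O(m)\cdot O(\log n)\cdot 3 = O(m\log n)$ analyzed separating edges.

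For correctness, I will show by induction over the oracle calls that the subtree $T$ maintained by \autoref{lemma:decompsearch} always contains the target $x$ whenever $G\cap\ch(S_i)_{l:l'}\neq\emptyset$. If $v$ is a vertex of $\fpvdmth(S\cap Q)$, the argument of \autoref{theorem:singlesearch} applies verbatim. Otherwise we are in one of the cases $k=0,\dots,3$. By \autoref{lemma:sep-edge} and \autoref{corr:optim-edge-decision}, every section excluded by the case analysis is disjoint from $G$, and the selected incident edge of $v$ is the unique one whose subtree contains all cells of the remaining leftover section (\autoref{obs:section-subtree}, \autoref{fig:vertex-sketch}). Since $x$ is adjacent to a cell of a point of $G$ in that section, the answer is valid. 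If a section contains no point of $G$, the search either aborts or returns some edge; in both cases it contributes only a bounded number of candidate points. If some analyzed edge passes through its own midpoint, we have found the center directly.

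Finally, every point of $G$ lies on $\ch(S\cap Q)$ and therefore in some canonical section. The corresponding search returns an edge or vertex whose defining points include $G\cap\ch(S_i)_{l:l'}$. This holds even when that intersection has size one, since $x$ is then an edge adjacent to the right cell. Collecting the $O(m)$ defining points returned over all sections yields a superset $C$ of $G$ with $C \subseteq S\cap Q$. Running a linear-time smallest enclosing disk algorithm on $C$ then returns $\sedmth(S\cap Q)$, because $\sedmth(C)\supseteq$ the disk defined by $G$ and $C\subseteq S\cap Q$ forces equality.

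The main obstacle I expect is the termination case of \autoref{lemma:decompsearch}. There the oracle's edge leads to an already-removed centroid, and we must ensure the returned $x$ is genuinely correct in the multi-set setting, where the oracle answers are derived from $\fpvdmth(S\cap Q)$ rather than $\fpvdmth(S_i)$. I would argue that the answers are consistent across calls because each answer is justified by a true statement about $G$. Consequently, two oracle answers pointing toward each other across the same edge $e$ certify that $x$ lies on $e$.
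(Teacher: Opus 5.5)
Your proposal is correct and follows the paper's argument. There are $O(m)$ canonical sections, each searched with $O(\log n)$ oracle calls by \autoref{lemma:decompsearch}; each call analyzes at most three separating edges; and a final linear-time computation of $\sedmth(C)$ on the $O(m)$ collected defining points $C$ finishes the job. Your extra correctness remarks just spell out what the paper leaves implicit: the consistency of oracle answers in the termination step, and the chain $G \subseteq C \subseteq S\cap Q$. The latter is cleanest argued by comparing radii and then using uniqueness of the smallest enclosing disk, rather than by disk containment.
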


\subsection{Finding separating edges}\label{subsec:find-sep-edge}
In this subsection we will give an $O(m \log n)$ procedure to find a point on a separating edge.

Consider the canonical sets of points $S_1, \dots, S_m$, a point
$p \in S_i$, and a vertex $v$ of $\fpvdmth(S_i)$ defined by $p$.
Additionally, we are given a canonical section $\ch(S_i)_{a:b}$ with
$p \in \ch(S_i)_{a+1:b-1}$, \ie the neighbors of $p$ lie in
$\ch(S_i)_{a:b}$.  By assumption $v$ does not exist in
$\fpvdmth(S \cap Q)$. We will describe an $O(m \log n)$ time procedure
to find a point $s$ on an edge of $\fpvdmth(S \cap Q)$ defined by $p$
and a point $p' \notin \ch(S_i)_{a:b}$, \ie a point on a separating edge.

Consider the ray $r$ emanating from $v$ and pointing opposite to $p$.
We show that $r$ intersects a separating edge of $p$.
\begin{lemma}
    The ray $r$ emanating from $v$ and pointing opposite to $p$ intersects an edge of $\cellmth(p, S \cap Q)$ defined by a point $p' \notin \ch(S_i)_{a:b}$.
\end{lemma}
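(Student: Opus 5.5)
We follow the ray $r$ from $v$ and track which cell of $\fpvdmth(S \cap Q)$ it lies in. There are three steps.

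\textbf{Step 1: $r$ eventually enters $\cellmth(p, S \cap Q)$.} Since $v \in \cellmth(p, S_i)$, \autoref{obs:fpvrays} gives $r \subseteq \cellmth(p, S_i)$, so along $r$ the point $p$ is farthest among $S_i$. We claim that far out along $r$ the point $p$ is also farthest among $S \cap Q$. Moving to infinity in direction $u = (v-p)/\|v-p\|$, the farthest point is the one minimizing $\langle q, u\rangle$, with ties broken by the offset. Since $p$ lies on $\ch(S \cap Q)$, we want $-u$ to lie in the normal cone of $p$ with respect to $\ch(S \cap Q)$.

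We know $-u$ lies in the normal cone of $p$ with respect to $\ch(S_i)$, because $v \in \cellmth(p, S_i)$ and the unbounded direction $u$ stays in that cell. The hypothesis that $p_{\pm 1}$ lie in the canonical section, and are therefore consecutive hull neighbours of $p$ in $\ch(S \cap Q)$ as well, gives equality of the two normal cones at $p$, since a normal cone is determined by the two neighbouring hull edges. Hence some tail of $r$ lies in $\cellmth(p, S\cap Q)$. The boundary cases of the cone are handled by general position.

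\textbf{Step 2: $v \notin \cellmth(p, S\cap Q)$.} Since $v$ is not a vertex of $\fpvdmth(S\cap Q)$, some point $q \in S\cap Q$ is strictly farther from $v$ than $p_a, p_b, p_c$, and in particular farther than $p$. Such a $q$ lies outside $S_i$, because within $S_i$ the defining points of $v$ are farthest. Therefore $r$ must cross the boundary of $\cellmth(p, S\cap Q)$. Because the cell is convex and the tail of $r$ lies in it, $r$ crosses this boundary exactly once, at some point $s$ on an edge $e$ defined by $p$ and some $p'$. Just before $s$, that is, between $v$ and $s$, the farthest point is not $p$.

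\textbf{Step 3: $p' \notin \ch(S_i)_{a:b}$.} This is the main obstacle. Suppose instead that $p' \in \ch(S_i)_{a:b} \subseteq S_i$. On the edge $e$ we have $d(s,p') = d(s,p)$, and points of $r$ just before $s$ are closer to $p$ than to $p'$.

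The key is to compare the bisector of $p$ and $p'$ with the cell $\cellmth(p, S_i)$. Since $r \subseteq \cellmth(p, S_i)$, we have $d(x,p) \ge d(x,p')$ for every $x \in r$. Hence the bisector of $p$ and $p'$ meets $r$ only where it touches the closed halfplane $H(p,p')$ tangentially. As the bisector is a line transverse to $r$ (it separates $p$ from the tail of $r$), this is impossible unless equality holds on a whole segment. General position excludes that, or at worst it forces $s = v$, which contradicts Step 2 because $v$ lies outside the cell.

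A cleaner way to state this: the points $x \in r$ with $d(x,p') > d(x,p)$ form an empty set. Hence the point of $S\cap Q$ that overtakes $p$ just before $s$, and defines the boundary edge there, cannot come from $S_i$. So $p' \notin S_i$, and in particular $p' \notin \ch(S_i)_{a:b}$.

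I would double-check that the edge actually hit at $s$ is defined with the overtaking point, which holds by genericity of $s$ (it lies in the relative interior of an edge). If $r$ happens to hit a vertex, choose the adjacent edge whose other defining point is the overtaking one.
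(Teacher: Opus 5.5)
Your proof is correct and has the same structure as the paper's proof. The tail of $r$ lies in $\cellmth(p, S\cap Q)$; you get this from the normal cone of $p$, which is the same for $\ch(S_i)$ and $\ch(S\cap Q)$ because the hull neighbours coincide, whereas the paper uses convexity and the segment $uw$ between the vertices of the two unbounded edges. Since $v$ lies outside that cell, $r$ crosses an edge of it, and the edge's other defining point cannot lie in $S_i$ because $r \subseteq \cellmth(p, S_i)$.

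The tangency/general-position sentence in Step 3 is muddled and unnecessary. Your ``cleaner'' formulation (no point of $S_i$ is ever strictly farther than $p$ along $r$) already suffices. In fact $r\setminus\{v\}$ lies in the interior of $\cellmth(p,S_i)$, which is the paper's phrasing, so every edge of $\cellmth(p,S\cap Q)$ through $s$ qualifies, even when $s$ is a vertex.
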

\begin{proof}
    The half-infinite edges $e_1, e_2$ bounding $\cellmth(p, S \cap Q)$ are those of $p$ and its neighbors contained in $S_i$.
    Let $u,w$ be the vertices of $\fpvdmth(S \cap Q)$ that $e_1$ and $e_2$ are incident to respectively, then by convexity the line segment $uw$ lies in $\cellmth(p, S \cap Q)$.
    By \autoref{obs:fpvrays} we have $r \subseteq \cellmth(p, S_i)$, this cell is also bounded by the half-infinite edges defined by $p$ and its neighbors. Since the ray is fully contained by the induced halfspaces of $e_1$ and $e_2$ it intersects the line segment $uw \subseteq \cellmth(p, S \cap Q)$.
    The ray emanates from $v \notin \cellmth(p, S \cap Q)$.
    To intersect the line segment it must intersect an edge $e$ of $\cellmth(p, S \cap Q)$.
    The other point $p'$ that defines $e$ cannot lie in $S_i$ since the ray lies in the interior of $\cellmth(p, S_i)$.
\end{proof}
\begin{figure}
    \centering
    \includegraphics*[trim=0 0 0 0]{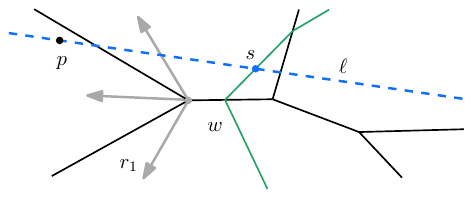}
    \caption{The point $p \in S_i$, the boundary of $\cellmth(p, S_j \cup \{p\})$ (green) and $\fpvdmth(S_j)$ (black).}\label{fig:intersec-single-diagram}
\end{figure}
Thus, there is an intersection point of $r$ and a separating edge.
To determine this intersection point, observe that $\cellmth(p, S \cap Q) = \bigcap_{j=1}^m \fpvdmth(p, S_j \cup \{p\})$ holds by definition.
We determine the intersection of $r$ and the boundary of $\fpvdmth(p, S_i \cup \{p\})$ for every $S_j \ne S_i$, then the most restrictive intersection point, \ie farthest from $v$, corresponds to a point on an edge of $\cellmth(p, S \cap Q)$.
We extend $r$ to infinity in both directions to get the line $\ell \supset r$.
\begin{lemma}
    For every set $S_j$ in time $O(\log n)$ we can find the intersection of $\ell$ and an edge $e$ of $\cellmth(p, S_j \cup \{p\})$ and its defining point $p'$.
\end{lemma}
\begin{proof}
    Let $s = \ell \cap e$ be the intersection point.
    Observe that $s \in \cellmth(p', S_j \cup \{p\}) \subseteq \cellmth(p', S_j)$, thus we identify the cell of $\fpvdmth(S_j)$ containing the not explicitly known point $s$.
    We use the search procedure of \autoref{lemma:decompsearch} to efficiently find an arbitrary edge of $\cellmth(p', S_j)$ with a technique inspired by \cite{aronovDataStructuresHalfplane2018}.
    In every step we ensure that the implicitly maintained subtree of $\fpvdmth(S_j)$ contains an edge of $\cellmth(p', S_j)$.

    Let $w$ be vertex given to the oracle $\mathcal O_x(w)$.
    By \autoref{obs:fpv-vertex-rays} the rays $r_1, r_2, r_3$ emanating from $w$ and pointing opposite to the defining points $q_1, q_2, q_3$ of $w$ divide the plane in three regions having one corresponding incident edge each as illustrated by \autoref{fig:intersec-single-diagram}.
    The region containing $s$ also contains a bounding edge $x$ of $\cellmth(p', S_j)$.
    
    To decide which region contains $s$, we may have to determine the location of $s$ relative to an intersection of $\ell$ and a ray $r_k$ of $r_1, r_2, r_3$.
    Observe that $s$ by definition splits $\ell$ into two parts, the part where $p$ is the farthest point extending along $r$, and the part where $p$ is closer than another point.
    Thus, to infer the location of $s$ relative to $\ell \cap r_k$ we must only compare the distance of $p$ and $q_k$ and choose the region we reach by moving along $r$ from $\ell \cap r_k$ exactly when $p$ is closer.

    The procedure of \autoref{lemma:decompsearch} invokes $\mathcal O_x(w)$ times $O(\log n)$, we resolve every invocation in time $O(1)$.
    Once a single edge $\fpvdmth(S_j)$ remains, the procedure of \autoref{lemma:decompsearch} returns the edge. In constant time we decide which defining point is $p'$.
\end{proof}

Thus, we can find $m$ intersection points $s_1, \dots, s_m$ and defining points $p_1, \dots, p_m$ in time $O(m \log n)$.
In time $O(m)$ we identify the point $s_j$ farthest along $r$ and return $(s_j, p, p_j)$.

\subsection{Putting things together}\label{subsec:puttingthingstogether}

To determine $\sedmth(S \cap Q)$ we derive $m = O(\log n)$ canonical sets and canonical sections of $\ch(S \cap Q)$ in time $O(\log^3 n)$ using our search data structure as in \autoref{lemma:preprocessing}.
The search procedure for every canonical set introduced in \autoref{subsec:decomp-multi-sets} has $O(\log n)$ steps.
Each step involves finding up to three separating edges. With the procedure of \autoref{subsec:find-sep-edge} we find a separating edge in time $O(m \log n)$.
Finally, we determine $\sedmth(S \cap Q)$ in time $O(m)$.
Overall,  we determine $\sedmth(S \cap Q)$ in time $O(m^2 \log^2 n)$.

\begin{theorem}
    Given sets $S_1, \dots, S_m$ with $|S_i| = O(n)$, the diagrams \,$\fpvdmth(S_1), \dots, \allowbreak{} \fpvdmth(S_m)$, their centroid decompositions and the canonical sections of the convex hull of \ $\bigcup_{i=1}^m S_i$, in time $O(m^2 \log^2 n)$ we can find $\sedmth(\bigcup_{i=1}^m S_i)$, assuming the convex hull consists of \,$O(m)$ canonical sections.
\end{theorem}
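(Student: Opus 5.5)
The plan is to assemble the machinery of \autoref{subsec:decomp-multi-sets} and \autoref{subsec:find-sep-edge} and account for its cost. Let $S=\bigcup_{i=1}^m S_i$ and let $G$ be the set of defining points of $\sedmth(S)$. By Fact~\ref{fact:sed} (the defining-points fact), $|G|\le 3$, and every point of $G$ lies on $\ch(S)$. By assumption, $\ch(S)$ is the disjoint union of $O(m)$ canonical sections. So it suffices, for each canonical section $\ch(S_i)_{l:l'}$, to extract a constant-size candidate set that contains $G\cap\ch(S_i)_{l:l'}$. The union of these candidate sets has size $O(m)$ and contains $G$. Since $\sedmth$ of any subset containing $G$ is $\sedmth(S)$, we finish with Megiddo's linear-time algorithm in $O(m)$ time.

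For a fixed canonical section, we run the centroid-decomposition search of \autoref{lemma:decompsearch} on $\fpvdmth(S_i)$ with the oracle built in \autoref{subsec:decomp-multi-sets}. The correctness invariant is that the implicitly maintained subtree always contains the target $x$, the edge or vertex adjacent to the cells of $G\cap\ch(S_i)_{l:l'}$.

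At a queried vertex $v$ with defining points $p_a,p_b,p_c$, we proceed as follows.
\begin{itemize}
\item We first test in $O(m\log n)$ time whether $v$ is a vertex of $\fpvdmth(S)$, using point location in each $\fpvdmth(S_j)$.
\item If it is, we proceed exactly as in \autoref{theorem:singlesearch}, using \autoref{corr:optim-edge-decision}.
\item If it is not, we use \autoref{lemma:sep-edge}: each relevant $p_t$ has a separating edge whose removal separates $\prevmth(p_t)$ from $\nextmth(p_t)$. Analyzing this edge via \autoref{corr:optim-edge-decision} rules out one side.
\end{itemize}
The case distinction on $k\in\{0,1,2,3\}$ then shows that exactly one incident edge of $v$ can still lead to $G$. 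The only consistent outcome patterns are the monotone ones, because $\nextmth(p_a)\supseteq\nextmth(p_b)\supseteq\nextmth(p_c)$. If the search aborts or ends in the empty case, the returned candidate points are still harmless: they only add to the final candidate set.

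For the running time, each oracle call needs up to three separating edges. By the ray argument of \autoref{subsec:find-sep-edge}, a point on a separating edge is obtained by intersecting the ray $r$ from $v$ opposite to $p$ with $\cellmth(p,S_j\cup\{p\})$ for every $j$. Each intersection costs $O(\log n)$ by the lemma there, giving $O(m\log n)$ per separating edge, after which we take the farthest intersection along $r$. The vertex-existence test is also $O(m\log n)$, so one oracle call costs $O(m\log n)$.

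Since $|S_i|=O(n)$, the centroid decomposition has height $O(\log n)$, so \autoref{lemma:decompsearch} makes $O(\log n)$ oracle calls per canonical section. That gives $O(m\log^2 n)$ per section and $O(m^2\log^2 n)$ over all $O(m)$ sections, which dominates the final $O(m)$ step.

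The main obstacle is correctness of the oracle when $v$ does not survive in $\fpvdmth(S)$. There we must argue that the point $s$ found on the separating edge, together with its defining pair $(p,p')$, is a valid input to \autoref{corr:optim-edge-decision} for the diagram $\fpvdmth(S)$ rather than for $\fpvdmth(S_i)$. We must also argue that the inferred emptiness of $\nextmth(p_t)\cap G$ or $\prevmth(p_t)\cap G$ translates into discarding exactly the subtrees of $\fpvdmth(S_i)$ behind the other incident edges of $v$. I would handle this with \autoref{obs:section-subtree} applied to both diagrams, together with the correspondence between the sections $\ch(S_i)_{l:a},\ch(S_i)_{a:b},\dots$ and the three incident edges of $v$ shown in \autoref{fig:vertex-sketch}.
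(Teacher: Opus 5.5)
Your proposal is correct and follows the paper's own argument. For each of the $O(m)$ canonical sections you run the centroid-decomposition search on $\fpvdmth(S_i)$ with the multi-set oracle, and each of the $O(\log n)$ oracle calls costs $O(m\log n)$ for the vertex-existence test plus up to three separating edges found by ray intersection; you then finish with a linear-time smallest-enclosing-disk computation on the $O(m)$ collected candidate points. The correctness issue you flag for vertices that do not survive in the combined diagram is handled in the paper by the separating-edge lemma and the case analysis on $k$, exactly as you indicate.
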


\begin{theorem}\label{theorem:decompconstantsets}
    Given a constant number of sets having a constant number of canonical sections, we can find the smallest enclosing disk in time $O(\log^2 n)$.
\end{theorem}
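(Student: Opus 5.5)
This is a direct specialization of the preceding theorem, which handles $m$ sets in time $O(m^2\log^2 n)$. I will check that its hypotheses hold with $m=O(1)$ and that the $O(m)$ terms become constants.

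Let the given sets be $S_1,\dots,S_m$ with $m=O(1)$ and $n=\sum_i|S_i|$. As in \autoref{lemma:preprocessing}, we assume each $S_i$ comes with its convex hull, its diagram $\fpvdmth(S_i)$ and the centroid decomposition of that diagram. Since $|S_i|\le n$, the size assumption $|S_i|=O(n)$ of the previous theorem holds.

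The remaining hypothesis is that the convex hull of $\bigcup_i S_i$ consists of $O(m)$ canonical sections. Here the statement gives a constant number of canonical sections directly. In any case, \autoref{lemma:ch-bound} (the Davenport--Schinzel argument) bounds their number by $2m-1=O(1)$ whenever the convex hulls of the sets are disjoint. If the sections are not supplied, we compute them from the tangents between the constantly many hulls. Following \autoref{lemma:preprocessing}, this takes $O(m^2\log n)=O(\log n)$ time, which does not affect the final bound.

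Then we run the procedure of \autoref{subsec:decomp-multi-sets}, once per canonical section, so a constant number of times. Each run performs $O(\log n)$ oracle invocations by \autoref{lemma:decompsearch}. Each invocation does the following:
\begin{itemize}
\item It tests whether the current vertex exists in the combined diagram, using $O(m)$ farthest-point queries of cost $O(\log n)$ each.
\item It finds up to three separating edges with the procedure of \autoref{subsec:find-sep-edge}, each in $O(m\log n)$ time.
\item It analyzes those edges in $O(1)$ time via \autoref{corr:optim-edge-decision}.
\end{itemize}
With $m=O(1)$, each invocation costs $O(\log n)$ and each section costs $O(\log^2 n)$.

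Finally, we collect the $O(m)=O(1)$ candidate defining points and compute their smallest enclosing disk in constant time. The total running time is $O(m^2\log^2 n)=O(\log^2 n)$.

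The main point to verify is that every hidden factor of $m$ is bounded by the number of sets or canonical sections, so that none is secretly a function of $n$. In particular, the farthest-point location step within each $\fpvdmth(S_j)$ must cost only $O(\log n)$, and it does, through the centroid-decomposition search. No other step requires new arguments.
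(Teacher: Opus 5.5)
Your proposal is correct and follows the paper's approach. The paper gives no separate proof: it states this theorem as the immediate specialization $m=O(1)$ of the preceding $O(m^2\log^2 n)$ theorem, exactly as you argue. Your remark that the canonical sections of constantly many hulls can be computed from tangents in $O(m^2\log n)=O(\log n)$ time is a sensible extra check for the later use of this result as the base case in Eppstein's framework, but it does not change the argument.
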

In the context of range-aggregate queries, the number of sets is $m = O(\log n)$ and the convex hull consists of $O(m)$ canonical sections by \autoref{lemma:ch-bound}.
This yields the following theorem.
\begin{theorem}
    Let $S$ be a set of points of size $n$ in the plane.
    With $O(n \log^2 n)$ preprocessing time and storage we can answer orthogonal range-aggregate queries for the smallest enclosing disk in time $O(\log^4 n)$.
\end{theorem}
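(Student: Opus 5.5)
The theorem follows by composing the preprocessing of \autoref{lemma:preprocessing} with the preceding theorem on $m$ sets. The preprocessing is exactly that of \autoref{lemma:preprocessing}. We build the modified two-dimensional range tree of~\cite{khareImprovedBoundsSmallest2014} on $S$. For every node $v$ of every secondary tree we store $\ch(P(v))$, computed bottom-up by linear-time merging of the children's hulls. We also store $\fpvdmth(P(v))$, computed in $O(|P(v)|)$ time from the hull~\cite{aggarwalLineartimeAlgorithmComputing1989}, and its centroid decomposition, computed in linear time~\cite{dellagiustinaNewLinearTimeAlgorithm2019}. Each point lies in $O(\log^2 n)$ canonical sets, and every stored structure is linear in the size of its canonical set. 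Summing over all nodes therefore gives $O(n\log^2 n)$ time and space.

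For a query rectangle $Q$ we proceed in three steps.
\begin{enumerate}
\item By \autoref{lemma:preprocessing}, in $O(\log^2 n)$ time we obtain $m=O(\log n)$ canonical sets $S_1,\dots,S_m$ with $\ch(S_1\cup\dots\cup S_m)=\ch(S\cap Q)$, together with their hulls, diagrams and centroid decompositions.
\item In $O(m^2\log n)=O(\log^3 n)$ time we derive the canonical sections of $\ch(S\cap Q)$. By \autoref{lemma:ch-bound} there are $O(m)$ of them.
\item We apply the preceding theorem to $S_1,\dots,S_m$. Its hypotheses hold: each $|S_i|=O(n)$, the diagrams and centroid decompositions were precomputed, and the hull consists of $O(m)$ canonical sections. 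It returns the smallest enclosing disk of $\bigcup_i S_i$ in $O(m^2\log^2 n)=O(\log^4 n)$ time.
\end{enumerate}

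The result of the last step equals $\sedmth(S\cap Q)$. The smallest enclosing disk depends only on the hull vertices, since the defining points lie on $\ch$. The union $\bigcup_i S_i$ may omit some points of $S\cap Q$ (the discarded nodes of \autoref{fig:khare}). Those points lie inside $\ch(S\cap Q)$, so they lie inside the disk and do not change it.

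Adding the three phases gives a total query time of $O(\log^2 n+\log^3 n+\log^4 n)=O(\log^4 n)$.

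The only real obstacle is checking that the preceding theorem applies to the canonical sets produced by the modified range tree. Two things need confirming: that omitting the discarded nodes does not change the hull, which we rely on from~\cite{khareImprovedBoundsSmallest2014}, and that the number of canonical sections is $O(m)$, which is \autoref{lemma:ch-bound}. Everything else is adding up the costs.
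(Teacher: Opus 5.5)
Your proposal is correct and follows the paper's own argument. It uses the preprocessing of \autoref{lemma:preprocessing}, the $O(m)$ bound on canonical sections from \autoref{lemma:ch-bound}, and the $O(m^2\log^2 n)$ multi-set theorem with $m=O(\log n)$ to get $O(\log^4 n)$ query time, and your remark that points of discarded nodes lie inside $\ch(S\cap Q)$, and hence inside the disk, justifies a step the paper leaves implicit.
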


We note that the search procedure is versatile enough to find other points on the graph $\fpvdmth(S \cap Q)$, provided we can determine which subtree contains them, given an edge.

\section{Randomized approaches}
\label{sec:Randomized_Approaches}

A well known randomized algorithm that determines $\sedmth(S)$ given a
set of $n$ points $S$ in expected linear time is Welzl's
algorithm~\cite{welzlSmallestEnclosingDisks1991}. A natural idea is to
try to generalize this algorithm to sets of disjoint convex
polygons. In
\autoref{sub:An_attempt_at_Randomized_Incremental_Construction} we
present a counter example that shows that a direct adaptation
unfortunately does not work. However, we can show that we can plug in
our implicit searching procedure from the previous section into
Eppstein's randomized dynamic programming framework, see
\autoref{sub:Using_Eppstein's_Dynamic_Programming_framework}, to
answer smallest enclosing disk queries in expected
$O(\log^{5/2}n\log\log n)$ time instead.

\subsection{An attempt at randomized incremental construction}
\label{sub:An_attempt_at_Randomized_Incremental_Construction}

Welzl's algorithm~\cite{welzlSmallestEnclosingDisks1991} builds upon
three central properties:
\begin{itemize}
    \item $\sedmth(S)$ is defined by up to three points.
    \item By testing $p \notin \sedmth(S \setminus \{p\})$, we can decide whether any point $p \in S$ is a defining point.
    \item The probability that a random point $p \in S$ is a defining point is $O(\frac{1}{n})$.
\end{itemize}
This allows for a simple procedure: Consider the points $p_1, \dots, p_n \in S$ in random order and maintain an intermediate disk $D_i$ at every step.
When considering $p_{i+1}$, in case of $p_{i+1} \in D_i$ set $D_{i+1} := D_i$, otherwise $p_{i+1}$ is a defining point and lies on the boundary of $D_{i+1}$.
Recurse to construct $D_{i+1} := \md(S', R)$ with $S' = \{p_1, \dots, p_i\}$ and $R = \{p_{i+1}\}$.
The disk $\md(S', R)$ denotes the smallest disk covering the points $S' \cup R$, while having all points of $R$ on its boundary.
Any recursive subproblem $\md(S', R)$ can be solved with the same strategy. For $|R| = 3$ the disk is uniquely defined.
The probability that $p_{i+1} \notin D_i$ holds is $O(\frac {1}{i})$, leading to $O(n)$ expected runtime when solving $\md(S, \emptyset) = \sedmth(S)$.

\SetKwFunction{SetMiniDisk}{SetMiniDisk}%
Given canonical sets $S_1, \dots, S_m$ we can test $S_i \subseteq D$ for any disk $D$ in time $O(\log n)$, by checking if $D$ contains the point that is farthest from its center.
In case $S_{i+1} \nsubseteq \sedmth(S_1 \cup \dots \cup S_{i})$ we can conclude $S_{i+1}$ contains at least one defining point.
    Thus, a natural idea is to lift Welzl's algorithm to operate on sets, instead of single points.
\newcommand{\setminidiskprocedure}[0]{
\begin{procedure}
    \caption{SetMiniDisk($\mathcal S, \mathcal R$)}
    \SetKw{KwOr}{or}
    \uIf{$|\mathcal R| = 3$ \KwOr $|\mathcal S| = 0$  }{
        \Return{$\sedmth(\bigcup_{S_i \in \mathcal R} S_i)$}\;
    }
    Choose a set $S \in \mathcal S$ uniformly at random\;
    $D \gets \SetMiniDisk(\mathcal S \setminus \{S\}, \mathcal R)$\;
    \uIf{$S \nsubseteq D$}{
        $D \gets \SetMiniDisk(\mathcal S \setminus \{S\}, \mathcal R \cup \{S\})$\;
    }
    \Return{$D$}\;
\end{procedure}%
}

    Let $\mathcal S$ and $\mathcal R$ be disjoint selections of $S_1, \dots, S_m$, \ie $\mathcal S, \mathcal R \subseteq \{S_1, \dots, S_m\}$ with $\mathcal S \cap \mathcal R = \emptyset$ and $|\mathcal{R}| \le 3$.
    The set $\mathcal{R}$ will contain sets that contain defining points of the smallest enclosing disk.
    Let $P(\mathcal S, \mathcal R)$ be the set of points contained in the inner sets, \ie $P(\mathcal S, \mathcal R) = \bigcup_{S \in \mathcal S} S\cup \bigcup_{S \in \mathcal R} S$.
    We refer to a selection of sets $\mathcal C \subseteq \mathcal R \cup \mathcal S$ as a \emph{base case}, if it contains every set of $\mathcal R$ and up to $3 - |\mathcal R|$ sets of $\mathcal S$.
    Further, if the smallest enclosing disk of the points in the sets of $\mathcal C$ covers all points, then $\mathcal C$ is a \emph{feasible} base case.
    We denote the set of feasible base cases with
    \[\fcases(\mathcal S, \mathcal R) = \left\{\mathcal C = \mathcal R \cup \mathcal S' \mid \mathcal S' \subseteq \mathcal S \land  P(\mathcal S, \mathcal R) \subseteq \sedmth(\bigcup_{S_i \in \mathcal C}S_i) \land |\mathcal C| \le 3\right\}.\]
    We say that a feasible base case $\mathcal C \in \fcases(\mathcal S, \mathcal R)$ \emph{yields} the disk $\sedmth(\bigcup_{S_i \in \mathcal C}S_i)$.

    \begin{figure}
        \centering
        \includegraphics*[page=2]{md.pdf}
        \caption{Two minimal disks, all points lie in $D_1 \cap D_2$. Hence, the defining points of each disk cannot define a \mbox{non-obtuse} triangle or an antipodal pair, contradicting minimality.}\label{fig:non_min_md}
    \end{figure}
    \begin{restatable}{lemma}{randSetSedUnique}\label{lemma:rand_sed_unique}
        All feasible base cases of \,$\mathcal S$ and $\mathcal R$ yield the same disk.
    \end{restatable}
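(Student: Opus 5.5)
The plan is to argue that every feasible base case yields exactly $\sedmth(P)$, where $P = P(\mathcal S, \mathcal R)$. Uniqueness then follows from the uniqueness of the smallest enclosing disk of a point set. This matches the hint in \autoref{fig:non_min_md}: if two distinct minimal disks $D_1, D_2$ both cover all points, we get a contradiction.

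Fix a feasible base case $\mathcal C$ and write $C = \bigcup_{S_i \in \mathcal C} S_i$ and $D_{\mathcal C} = \sedmth(C)$. Since $C \subseteq P$, every disk covering $P$ also covers $C$, so $\mathrm{radius}(D_{\mathcal C}) \le \mathrm{radius}(\sedmth(P))$. Feasibility gives $P \subseteq D_{\mathcal C}$, so $D_{\mathcal C}$ is an enclosing disk of $P$, and therefore $\mathrm{radius}(D_{\mathcal C}) \ge \mathrm{radius}(\sedmth(P))$. Hence $D_{\mathcal C}$ is a smallest enclosing disk of $P$.

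It remains to show that the smallest enclosing disk of $P$ is unique; then $D_{\mathcal C} = \sedmth(P)$ for every $\mathcal C \in \fcases(\mathcal S, \mathcal R)$. I would use the standard argument. Suppose $D_1 \ne D_2$ are both enclosing disks of $P$ with the same minimal radius $\rho$ and centers $z_1 \ne z_2$. Then $P \subseteq D_1 \cap D_2$, and $D_1 \cap D_2$ is contained in the disk centered at $\midmth(z_1, z_2)$ with radius $\sqrt{\rho^2 - d(z_1,z_2)^2/4} < \rho$. This contradicts minimality.

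Alternatively, in the language of the figure and the earlier Fact: the defining points of $D_1$ lie on $\partial D_1 \cap D_2$, which is an arc of $\partial D_1$ strictly shorter than a semicircle. Points on such an arc can form neither an antipodal pair nor a non-obtuse triangle, which contradicts the Fact (\texttt{factSEDDefiningPoints}) applied to $D_1$.

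The only subtle point is the equality of radii. One must use feasibility, i.e.\ that $D_{\mathcal C}$ covers all of $P$ and not just $C$, together with monotonicity of the smallest enclosing radius under inclusion. A degenerate case is an empty $C$ (no sets). Then $P$ must also be empty for feasibility to hold, or the statement is vacuous; I would either note this or assume nonempty sets.
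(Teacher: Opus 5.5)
Your proof is correct and follows essentially the same route as the paper. Equal radii come from feasibility plus minimality (the paper phrases this as a smaller feasible disk contradicting optimality of the larger one for its own base case), and coincidence comes from uniqueness of a minimal enclosing disk: your arc argument is exactly the paper's, while your lens bound $\sqrt{\rho^2 - d(z_1,z_2)^2/4}$ is a self-contained alternative that does not need the Fact, and routing through $\sedmth(P(\mathcal S,\mathcal R))$ has the small bonus of identifying the common disk explicitly.
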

    \begin{proof}
        Let $\mathcal C_1, \mathcal C_2 \in \fcases(\mathcal S, \mathcal R)$ be two feasible base cases yielding the disk $D_1$ and $D_2$ respectively.
        We first show that $D_1$ and $D_2$ have the same size, afterward we show $D_1 = D_2$.

        Assume $D_1$ and $D_2$ have differing radii, then the existence of the smaller disk implies that the larger disk was not the smallest enclosing disk of the points of the set of its base case, a contradiction.
        
        Thus assume that $D_1$ and $D_2$ have the same size.
        Observe that both disks cover all points, hence $P(\mathcal S, \mathcal R) \subseteq D_1 \cap D_2$ as shown by \autoref{fig:non_min_md}.
        Note that $\partial D_1 \cap D_2$ and $D_1 \cap \partial D_2$ are arcs of less than 180° which contain the defining points of $D_1$ and $D_2$ respectively.
        Thus, the defining points of either disk cannot form a \mbox{non-obtuse} triangle or antipodal pair, a contradiction.
    \end{proof}
    We denote the disk yielded by the feasible base cases with $\smd(\mathcal S, \mathcal R)$.
    If $\fcases(\mathcal S, \mathcal R) = \emptyset$ then $\smd(\mathcal S, \mathcal R)$ does not exist.
    For example, if $\mathcal R$ contains a set without a defining point, there may be no feasible base case, as shown in \autoref{fig:undefined_md}.
    Observe that $\sedmth(\bigcup_{S_i \in \mathcal S}S_i)$ is the special case $\smd(\mathcal S, \emptyset)$.
    \begin{figure}
        \centering
        \includegraphics*{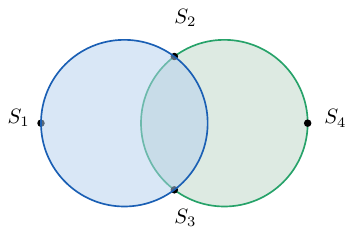}
        \caption{The disk $\smd(\{S_1, S_4\}, \{S_2, S_3\})$ is not defined. Every set of $S_1, \dots, S_4$ contains a single point. Since any base case has to include $S_2$ and $S_3$, only one of the sets $S_1$ and $S_4$ can be chosen. This allows the blue disk $\sedmth(S_1 \cup S_2 \cup S_3)$ or the green disk $\sedmth(S_2 \cup S_3 \cup S_4)$, both of which do not cover all points.}\label{fig:undefined_md}
    \end{figure}

    The following lemma, based on \cite{welzlSmallestEnclosingDisks1991,debergComputationalGeometryAlgorithms2008}, forms the central part of the algorithm.
    \begin{restatable}{lemma}{randSetSed}\label{lemma:rand_sed}${}$\\
        Let $S \in \mathcal S$. Provided $\smd(\mathcal S, \mathcal R)$ and $\smd(\mathcal S \setminus \{S\}, \mathcal R)$ exist
        \begin{enumerate}
            \item[i)] If $S \subseteq \smd(\mathcal S \setminus \{S\}, \mathcal R)$, \, $\smd(\mathcal S, \mathcal R) = \smd(\mathcal S \setminus \{S\}, \mathcal R)$.
            \item[ii)] If $S \nsubseteq \smd(\mathcal S \setminus \{S\}, \mathcal R)$, \, $\smd(\mathcal S, \mathcal R) = \smd(\mathcal S \setminus \{S\}, \mathcal R \cup \{S\})$.
        \end{enumerate}
    \end{restatable}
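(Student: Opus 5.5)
We argue directly from the definition of feasible base cases and from Lemma~\ref{lemma:rand_sed_unique}, which says that every feasible base case of a pair $(\mathcal S,\mathcal R)$ yields the same disk. So it suffices, in each part, to exhibit one feasible base case of the relevant pair and check that it yields the claimed disk. Write $D = \smd(\mathcal S\setminus\{S\},\mathcal R)$ and $D^* = \smd(\mathcal S,\mathcal R)$.

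For i), take a feasible base case $\mathcal C \in \fcases(\mathcal S\setminus\{S\},\mathcal R)$ yielding $D$. Then $\mathcal C$ contains $\mathcal R$ and at most $3-|\mathcal R|$ sets of $\mathcal S\setminus\{S\}\subseteq \mathcal S$, so it is a base case of $(\mathcal S,\mathcal R)$. Feasibility requires $P(\mathcal S,\mathcal R)\subseteq D$. This holds because $P(\mathcal S\setminus\{S\},\mathcal R)\subseteq D$ by assumption and $S\subseteq D$ by the hypothesis of i). Hence $\mathcal C\in\fcases(\mathcal S,\mathcal R)$, and by Lemma~\ref{lemma:rand_sed_unique} we get $D^*=D$.

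For ii), assume $S\nsubseteq D$ and take $\mathcal C^*\in\fcases(\mathcal S,\mathcal R)$ yielding $D^*$. The key claim is $S\in\mathcal C^*$. Suppose instead $S\notin\mathcal C^*$. Then $\mathcal C^*$ uses only sets of $\mathcal R$ and $\mathcal S\setminus\{S\}$, and its disk $D^*$ covers $P(\mathcal S,\mathcal R)\supseteq P(\mathcal S\setminus\{S\},\mathcal R)$. So $\mathcal C^*$ is a feasible base case of $(\mathcal S\setminus\{S\},\mathcal R)$, and Lemma~\ref{lemma:rand_sed_unique} gives $D^*=D$. But $D^*$ covers $S$ while $D$ does not, a contradiction. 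Therefore $S\in\mathcal C^*$.

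Since $S \in \mathcal C^*$, the set $\mathcal C^*$ contains $\mathcal R\cup\{S\}$ and has size at most $3$, so $|\mathcal R\cup\{S\}|\le 3$ and $\mathcal C^*$ is a base case of $(\mathcal S\setminus\{S\},\mathcal R\cup\{S\})$. Its point set is $P(\mathcal S\setminus\{S\},\mathcal R\cup\{S\}) = P(\mathcal S,\mathcal R)$, which $D^*$ covers, so the base case is feasible. In particular $\smd(\mathcal S\setminus\{S\},\mathcal R\cup\{S\})$ exists, and by Lemma~\ref{lemma:rand_sed_unique} it equals $D^*$.

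The only subtle step is showing $S\in\mathcal C^*$ in part ii). It needs the existence hypothesis for $D$ together with uniqueness, so that a base case avoiding $S$ is forced to yield exactly $D$ and therefore fails to cover $S$. Everything else is bookkeeping about which sets a base case may contain and its size bound.
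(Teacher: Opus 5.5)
Your proof is correct and follows the paper's argument essentially step for step. In i) you show that a feasible base case of $(\mathcal S\setminus\{S\},\mathcal R)$ is also one of $(\mathcal S,\mathcal R)$. In ii) you use \autoref{lemma:rand_sed_unique} to show by contradiction that every feasible base case of $(\mathcal S,\mathcal R)$ contains $S$, and hence is a feasible base case of $(\mathcal S\setminus\{S\},\mathcal R\cup\{S\})$. You are slightly more explicit than the paper in checking that $|\mathcal R\cup\{S\}|\le 3$ and that $P(\mathcal S\setminus\{S\},\mathcal R\cup\{S\})=P(\mathcal S,\mathcal R)$, which the paper leaves implicit.
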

    \begin{proof}
        ~
        \paragraph*{i)}
        Let $D_1 = \smd(\mathcal S, \mathcal R)$ and $D_2 = \smd(\mathcal S \setminus \{S\}, \mathcal R)$.
        Since $D_2$ covers $P(\mathcal S, \mathcal R)$, any feasible base case that yields $D_2$ is also a feasible base case of $D_1$, \ie $\fcases(\mathcal S \setminus \{S\}, \mathcal R) \subseteq \fcases(\mathcal S, \mathcal R)$.
        By \autoref{lemma:rand_sed_unique} all feasible base cases yield the same disk, since $\fcases(\mathcal S \setminus \{S\}, \mathcal R)$ is not empty we conclude $D_1 = D_2$.
        \paragraph*{ii)}
        Let $D_1 = \smd(\mathcal S, \mathcal R)$ and $D_2 = \smd(\mathcal S \setminus \{S\}, \mathcal R)$.
        We will show that all feasible base cases $\mathcal C \in \fcases(\mathcal S, \mathcal R)$ contain $S$, \ie $S \in \mathcal C$.
        
        Let $\mathcal C \in \fcases(\mathcal S, \mathcal R)$ be any feasible base case giving $D_1$.
        For contradiction assume that $\mathcal C$ does not contain $S$.
        Since $\mathcal C$ does not include $S$ and $P(\mathcal S \setminus \{S\}, \mathcal R) \subseteq D_1$ it also a feasible base case of $\mathcal S \setminus \{S\}, \mathcal R$ by definition.
        Hence, if $S \notin \mathcal C$ then $D_1 = D_2$ by \autoref{lemma:rand_sed_unique}, which is a contradiction since $S \subseteq D_1$ but $S \nsubseteq D_2$.
        
        We conclude that any feasible base case $\mathcal C$ that yields $D_1 = \smd(\mathcal S, \mathcal R)$ contains $S$.
        This implies $\smd(\mathcal S, \mathcal R) = \smd(\mathcal S \setminus \{S\}, \mathcal R \cup \{S\})$ since $\mathcal C$ is also a feasible base case of $\mathcal S \setminus \{S\}, \mathcal R \cup \{S\}$.
    \end{proof}

    To build an algorithm based on \autoref{lemma:rand_sed} we require the following assumption.
    \begin{assumption}\label{assump:lessermd}
        If $\smd(\mathcal S, \mathcal R)$ exists, then $\smd(\mathcal S \setminus \{S\}, \mathcal R)$ exists for every $S \in \mathcal S$.
    \end{assumption}
    For now, we suppose that \autoref{assump:lessermd} holds.
    However, we will later show that it does not hold in general.
    Lifting Welzl's algorithm to operate on sets is now straightforward with \autoref{lemma:rand_sed} as shown in \SetMiniDisk.
    \setminidiskprocedure{}

    To find the smallest enclosing disk of $S_1, \dots, S_m$ we invoke \SetMiniDisk$(\{S_1, \dots, S_m\}, \emptyset)$.
    Since $\smd(\mathcal S, \emptyset)$ exists, by \autoref{assump:lessermd} and \autoref{lemma:rand_sed} it follows that any intermediate disk considered by \SetMiniDisk also exists.

\newcommand{\smdcounterexample}[1]{%
\fbox{\includegraphics*[page=#1, width=0.31\linewidth]{ce.pdf}}%
}
\newcommand{\smdcounterexamplebody}[0]{
\centering%
    \smdcounterexample{1}
    \smdcounterexample{2}
    \smdcounterexample{3}
}
\begin{figure}
    \smdcounterexamplebody{}
    \caption{Example showing that not any intermediate disk is defined. From left to right: $\smd(\{S_1, \dots, S_4\}, \emptyset)$, $\smd(\{S_1, S_2, S_3\}, \{S_5\})$, $\smd(\{S_1, S_2\}, \{S_4, S_5\})$}\label{fig:rand-ce}
\end{figure}

    \paragraph*{Counterexample}
    We now consider why \autoref{assump:lessermd} does not hold in general.
    \autoref{fig:rand-ce} shows an example involving the sets $\mathcal S = \{S_1, \dots, S_5\}$ with $S_1 = \{p_1\}$, \dots, $S_5 = \{p_5\}$.
    We draw the sets in order $S_5, \dots, S_1$ and accordingly check $S_5 \nsubseteq \smd(\{S_1, \dots, S_4\}, \emptyset)$. This check succeeds as shown the first frame.
    Hence, by \autoref{lemma:rand_sed} we conclude \[\smd(\{S_1, \dots, S_5\}, \emptyset) =\allowbreak \smd(\{S_1, \dots, S_4\}, \{S_5\})\] and recurse into the call \SetMiniDisk$(\{S_1, \dots, S_4\}, \{S_5\})$.
    Within this call we draw the set $S_4$ and construct $\smd(\{S_1, S_2, S_3\}, \{S_5\})$ recursively.
    The second frame shows that $S_4$ is not covered by $\smd(\{S_1, S_2, S_3\}, \{S_5\})$.
    We conclude \[\smd(\{S_1, \dots, S_4\}, \{S_5\}) = \allowbreak \smd(\{S_1, S_2, S_3\}, \{S_4, S_5\})\] and recurse into the call to solve the latter subproblem.
    We draw $S_3$ and recurse into \SetMiniDisk$(\{S_1, S_2\}, \{S_4, S_5\})$, which attempts to construct the corresponding disk.
    This disk does not exist, as illustrated by the third frame.
    By definition $\smd(\{S_1, S_2\}, \{S_4, S_5\})$ is the smallest enclosing disk of $\mathcal R = \{S_4, S_5\}$ and $3- |\mathcal R| = 1$ sets of $\{S_1, S_2\}$.
    Hence, there are two candidate selections with the disks $\sedmth(S_2 \cup S_4 \cup S_5)$ (blue) and $\sedmth(S_1 \cup S_4 \cup S_5)$ (green).
    However, neither selection is a feasible base case, \ie neither covers all points of $S_1 \cup S_2 \cup S_4 \cup S_5$.
    Either $S_1$ or $S_2$ is not covered.

    Under \autoref{assump:lessermd} based on backwards analysis one may argue, that when drawing set $S_m$ of random ordered sets $S_1, \dots, S_{m}$ and having disk $D = \smd(\{S_1, \dots, S_{m-1}\}, \mathcal R)$ the probability of check $S_m \nsubseteq D$ holding and having to make an expensive recursive call is $O(\frac{1}{m})$, since $m -3$ sets are covered by $D$.
    However, in the example of \autoref{fig:rand-ce} we can force a lower bound of this probability arbitrarily close to $\frac {1}{2}$ by adding sets with points lying in the neighborhood of the point of $S_1$ and $S_2$: If the last constructed disk covers $S_1$ then it does not cover $S_2$ and its neighborhood and vice versa.

    This also demonstrates why Welzl's algorithm defines $\smd(S, R)$ to be the minimal disk covering all points and having all points of $R$ on its boundary.
    In the example, the corresponding disk constructed by Welzl's algorithm is shown in \autoref{fig:rand-ce-welzl}.
    Since $p_4$ and $p_5$ have to lie on the boundary, the disk $\md(\{p_1, p_2\}, \{p_4, p_5\})$ is not the smallest enclosing disk of $p_1, p_2, p_4, p_5$.
    \begin{figure}
        \centering
        \includegraphics*[page=4]{ce.pdf}
        \caption{Disk $\md(\{p_1, p_2\}, \{p_4, p_5\})$ that would be constructed by Welzl's algorithm applied to the example of \autoref{fig:rand-ce}.}\label{fig:rand-ce-welzl}
    \end{figure}

    A natural modification is to align the definition more closely with the definition of $\md(S, R)$ in Welzl's algorithm by requiring specific points to lie on the boundary of the disk.
    Hence, define $\smd(\mathcal S, R)$, with $\mathcal S \subseteq \{S_1, \dots, S_m\}$ and $R \subseteq S_1 \cup \dots \cup S_m$ as the smallest disk that has all points of $R$ on its boundary and covers all points $P(\mathcal S)$, where $P(\mathcal S) = \bigcup_{S \in \mathcal S} S$.
    A consequence of this definition is that with $|R| = 3$ or $\mathcal S = \emptyset$ the points $R$ define a disk we can determine in constant time.
    In this approach any intermediate disk is defined.
    \begin{lemma}
        If $\smd(\mathcal S, R)$ exists, then $\smd(\mathcal S \setminus \{S\}, R)$ exists for every $S \in \mathcal S$.
    \end{lemma}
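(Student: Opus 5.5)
The natural approach is a monotonicity argument. Removing a set from $\mathcal S$ only relaxes the covering constraint, and leaves the boundary constraint on $R$ unchanged.

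Recall the definition: $\smd(\mathcal S, R)$ is the smallest disk that has all of $R$ on its boundary and covers $P(\mathcal S)$. Consider the family
\[\mathcal F(\mathcal S, R) = \{D \mid R \subseteq \partial D,\ P(\mathcal S) \subseteq D\}.\]
"$\smd(\mathcal S, R)$ exists" means this family is nonempty and contains a disk of minimum radius. For $S \in \mathcal S$ we have $P(\mathcal S \setminus \{S\}) \subseteq P(\mathcal S)$, so $\mathcal F(\mathcal S, R) \subseteq \mathcal F(\mathcal S \setminus \{S\}, R)$. Hence the smaller problem has a nonempty feasible family, because $\smd(\mathcal S, R)$ itself belongs to it.

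What remains is to show that a minimum-radius member exists, and I would split on $|R|$.

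If $|R| = 3$, the circumcircle of the three points is the only candidate (the points are in general position, so not collinear), and it lies in the larger family. So $\smd(\mathcal S\setminus\{S\}, R)$ is exactly that disk.

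If $|R| \le 2$, the centers of admissible disks range over a closed set: the whole plane when $R=\emptyset$, a point's position being free when $|R|=1$, and the bisector line of the two points when $|R|=2$. The radius is a continuous function of the center, $\max(d(c, r), \max_{q \in P(\mathcal S\setminus\{S\})} d(c,q))$ together with the equality constraints. The feasible set of centers (with radius fixed by $R$ when $R \ne \emptyset$, subject to the covering inequalities) is closed and nonempty. The radius tends to infinity as the center goes to infinity, so we minimize over a compact sublevel set and a minimizer exists. The case $\mathcal S \setminus\{S\} = \emptyset$ is handled by the paper's convention that $R$ alone determines the disk: the smallest circle through $R$, which is also fine when $|R|\le 2$.

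The main obstacle is making the existence of a minimizer rigorous, and also uniqueness if "the smallest disk" is meant to be unique. For uniqueness I would follow Welzl's argument. If two distinct minimal disks $D_1, D_2$ of equal radius exist, take the convex combination of the disks: the center moves along the segment and the radius follows the usual interpolation $\sqrt{(1-\lambda)r_1^2+\lambda r_2^2 - \lambda(1-\lambda)|c_1c_2|^2}$-type formula. This yields a strictly smaller disk that still contains $D_1\cap D_2 \supseteq P(\mathcal S\setminus\{S\})$ and has $\partial D_1\cap\partial D_2 \supseteq R$ on its boundary, a contradiction. Both the existence argument and the uniqueness argument only use the constraint sets, so they transfer verbatim from $\mathcal S$ to $\mathcal S\setminus\{S\}$.
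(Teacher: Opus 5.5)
Your proof is correct. Its first step matches the paper's: $\smd(\mathcal S, R)$ itself covers $P(\mathcal S\setminus\{S\})$ and has $R$ on its boundary, so the feasible family for the smaller problem is nonempty.

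Where you differ is in showing that a minimal element exists. The paper argues geometrically, case by case in $|R|$. For $|R|=2$ it shrinks the disk along the bisector until a third point reaches the boundary or $R$ becomes antipodal. For $|R|=1$ it restricts to the finite set $\mathcal D'$ of disks determined by $q$ together with an antipodal partner or two further points, and gets a minimum by finiteness. You instead minimize a continuous, coercive radius function over a closed, nonempty set of feasible centers, and conclude via a compact sublevel set. The function is $c\mapsto d(c,r)$, or $\max_q d(c,q)$ when $R=\emptyset$. The feasible set is an intersection of closed halfplanes, intersected with the bisector line when $|R|=2$.

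Your route is uniform in $|R|$ and independent of dimension. It also avoids justifying that the shrinking terminates in the claimed configuration and that only shrunken disks can be minimal. The paper's route, in exchange, gives structural information: the minimizer is determined by $R$ plus a bounded number of further points, which is the combinatorial fact an algorithm would exploit.

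Your uniqueness argument via Welzl's convex combination is also correct, since $r(\lambda)^2=r^2-\lambda(1-\lambda)|c_1c_2|^2<r^2$ for equal radii $r$ and distinct centers. It goes beyond the paper, which never justifies the definite article in \emph{the smallest disk}.

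One small slip: $\max(d(c,r),\max_q d(c,q))$ is not the right objective when $R\neq\emptyset$, because minimizing it would let $r$ lie strictly inside the disk. The formulation you state right after is the correct one, and it is what your compactness argument actually uses: radius $d(c,r)$ on $\{c : d(c,q)\le d(c,r)\ \text{for all } q\}$.
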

    \begin{proof}
        The disk $\smd(\mathcal S, R)$ covers $P(\mathcal S) \cup R$ and has the points of $R$ on its boundary.
        Hence, there is a non-empty set of disks $\mathcal D$ that cover $P(\mathcal S \setminus \{S\})$ and have the points $R$ on its boundary.
        The set $\mathcal D$ has minimal elements:
        If $|R| = 3$, then $\mathcal D$ has just one element.
        If $|R| = 2$, as illustrated by \autoref{fig:r2} we can reduce the size of $\smd(\mathcal S, R)$ while keeping the points $R$ on its boundary until a third point lies on the boundary, or the points of $R$ become an antipodal pair.
        In case of $|R| = 1$, let $q \in R$ be the point in $R$.
        Consider the set of disks $\mathcal D' \subseteq \mathcal D$ we obtain by shrinking the disks $\mathcal D$ until an antipodal pair or three points lie on its boundary involving $q$ respectively.
        Any disk $D \in \mathcal D \setminus \mathcal D'$ can be shrunk further, while keeping $q$ on its boundary and is therefore not minimal.
        Hence, only disks in $\mathcal D'$ can be minimal.
        Since $S$ has finite cardinality, the set $\mathcal D'$ has finite cardinality and contains a disk with minimal radius.
    \end{proof}
    The main difficulty of this approach lies in efficiently determining
    which points should be chosen.
    \begin{figure}
        \centering
        \includegraphics*{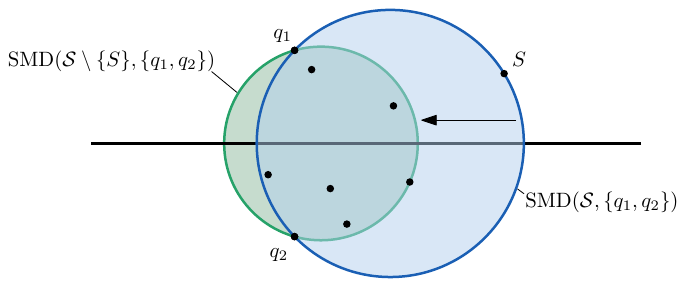}
        \caption{The center of any disk that has $q_1, q_2$ on its boundary lies on the bisector of $q_1, q_2$. The smaller disk $\smd(\mathcal S \setminus \{S\}, \{q_1, q_2\})$ can be found by shrinking $\smd(\mathcal S, \{q_1, q_2\})$, while keeping $q_1, q_2$ fixed on the boundary, until a third point lies on the boundary.}\label{fig:r2}
    \end{figure}

\subsection{Using Eppstein's dynamic programming framework}
\label{sub:Using_Eppstein's_Dynamic_Programming_framework}
Eppstein presented a framework for linear programming in the intersection of $k$ polyhedra in $\mathbb R^3$.
The dynamic programming based framework determines the optimum in expected time $O(k \log k \log n + \sqrt{k}\log k \log^3 n)$ ~\cite{eppsteinDynamicThreeDimensionalLinear1992}.
To adapt the framework, we test if a set $S_i$ contains a boundary point of $\sedmth(S \cap Q)$, by checking whether $S_i \nsubseteq \sedmth(\bigcup_{j \ne i}S_j)$ holds.
\begin{lemma}\label{lemma:boundaryPointCheck}
    If $S_i \nsubseteq \sedmth(\bigcup_{j \ne i}S_j)$ then $S_i$ contains a boundary point of $\sedmth(\bigcup_{j=1}^m S_j)$.
\end{lemma}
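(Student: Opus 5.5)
We prove the contrapositive. Let $D = \sedmth(\bigcup_{j=1}^m S_j)$ and $D' = \sedmth(\bigcup_{j \ne i} S_j)$, and suppose $S_i$ contains no boundary point of $D$; in particular, $S_i$ contains none of the defining points of $D$. Our goal is to show $S_i \subseteq D'$. The plan is to show that $D$ is itself the smallest enclosing disk of the points outside $S_i$, that is $D = D'$, and then to use that $D$ covers everything.

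First, by the Fact on defining points, $D$ is determined by two or three points on $\partial D$ forming an antipodal pair or a non-obtuse triangle, and $D = \sedmth(G)$ where $G$ is this set of defining points. By assumption every point of $G$ lies in some $S_j$ with $j \ne i$, so $G \subseteq \bigcup_{j\ne i} S_j$.

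Second, we use monotonicity of the smallest enclosing disk. Since $G \subseteq \bigcup_{j \ne i} S_j \subseteq \bigcup_j S_j$, we have
\[ \mathrm{rad}(\sedmth(G)) \le \mathrm{rad}(D') \le \mathrm{rad}(D) = \mathrm{rad}(\sedmth(G)), \]
so all three radii are equal. Moreover, $D$ covers $\bigcup_{j\ne i} S_j$ and has this minimal radius, so $D$ is a smallest enclosing disk of that set. By uniqueness of the smallest enclosing disk (two distinct disks of equal radius both covering the set would yield a strictly smaller disk containing their intersection), we get $D = D'$.

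Finally, $S_i \subseteq D = D'$, which contradicts the hypothesis $S_i \nsubseteq D'$. The only delicate step is the uniqueness argument. It is standard, but it can also be cited as the same reasoning used in \autoref{lemma:rand_sed_unique}. A further subtlety is the phrase ``boundary point'': it should be read as a defining point of $D$. Any point of $S_i$ on $\partial D$ that is not a defining point does not affect the argument, since general position lets us assume the boundary points are exactly the defining points.
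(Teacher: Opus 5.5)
Your proof is correct, but it takes a different route from the paper. The paper's proof is a one-line citation: ``repeated element-wise application'' of the incremental MiniDisc lemma of de Berg et al.\ (Lemma 4.14). That is, remove the points of $S_i$ one at a time from $\bigcup_j S_j$. At each step the removed point either lies in the disk of the remaining points, in which case the disk does not change. Or it does not, in which case it must lie on the boundary of the current disk. So if no point of $S_i$ lies on $\partial\sedmth(\bigcup_j S_j)$, the disk never changes, and $S_i \subseteq \sedmth(\bigcup_{j\ne i}S_j)$.

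You argue globally instead. Because all defining points $G$ of $D$ avoid $S_i$, the sandwich $\mathrm{rad}(\sedmth(G)) \le \mathrm{rad}(D') \le \mathrm{rad}(D)$ together with uniqueness forces $D = D'$. Your route has three advantages. It is self-contained within the paper, needing only the Fact (whose appendix proof gives $D = \sedmth(G)$) and the uniqueness reasoning of \autoref{lemma:rand_sed_unique}. It treats $S_i$ as a whole rather than point by point. And it proves the stronger conclusion that $S_i$ contains a \emph{defining} point of $D$.

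That last point makes your closing caveat unnecessary. You do not need to reinterpret ``boundary point'' as ``defining point'' or appeal to general position: your contrapositive only ever uses that $S_i$ misses the defining points, and defining points are boundary points. The paper's version, in turn, is shorter given the textbook lemma. It also mirrors the incremental mechanism of Welzl's algorithm that the surrounding section builds on.
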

\begin{proof}
    Repeated element-wise application of \cite[Lemma 4.14]{debergComputationalGeometryAlgorithms2008}.
\end{proof}
The probability is $O(\frac {1}{m})$, when $S_1 \dots, S_m$ are ordered randomly.
We compute $\sedmth(\bigcup_{j \ne i}S_j)$ recursively, by leaving out single sets and computing the smallest enclosing disk for the remaining sets.
    \SetKwFunction{EMD}{DMD}%
    Based on the framework we get \autoref{alg:eppstein} (``DynamicMiniDisk''), which is implicitly a dynamic program.
    A call $\EMD(x,y,z)$ computes the smallest enclosing disk of $S_1, \dots, S_z$ and $S_{y+z}, S_{x+y+z}$.
    Hence, the call $\EMD(1,1,m-2)$ determines $\sedmth(S_1 \cup \dots \cup S_m)$.

    \begin{procedure}[]
    \caption{DMD($x,y,z$)}\label{alg:eppstein}
    \SetAlgoRefName{DMD}
    \uIf{$z = 1$}{
        \Return{$\sedmth(S_x \cup S_y \cup S_z)$}\;
    }
    \uIf{$y = 1$}{
        \lIf{$S_{x+y+z} \subseteq \EMD(y,1,z-1)$}{
            \Return{$\EMD(y,1,z-1)$}
        }
        \lIf{$S_{y+z} \subseteq \EMD(x+y,1,z-1)$}{
            \Return{$\EMD(x+1,1,z-1)$}
        }
        \lIf{$S_{z} \subseteq \EMD(x,y+1,z-1)$}{
            \Return{$\EMD(x,y+1,z-1)$}
        }
        \Return{$\sedmth(S_x \cup S_y \cup S_z)$}\;
    }
    \uElse{
        \lIf{$S_{z} \subseteq \EMD(x,y+1,z-1)$}{
            \Return{$\EMD(x,y+1,z-1)$}
        }
        \lIf{$S_{y+z} \subseteq \EMD(x+y,1,z-1)$}{
            \Return{$\EMD(x+1,1,z-1)$}
        }
        \lIf{$S_{x+y+z} \subseteq \EMD(y,1,z-1)$}{
            \Return{$\EMD(y,1,z-1)$}
        }
        \Return{$\sedmth(S_x \cup S_y \cup S_z)$}\;
    }
    \end{procedure}
    The expected number of evaluated cells is $O(m \log m)$ as shown in~\cite{eppsteinDynamicThreeDimensionalLinear1992}. Each cell involves $O(1)$ containment checks of time $O(\log n)$.
    The expected number of solved base cases, \ie $\sedmth(S_x \cup S_y \cup S_z)$ is $O(\sqrt{m}\log m)$ by~\cite{eppsteinDynamicThreeDimensionalLinear1992}.
    By \autoref{theorem:decompconstantsets} we solve a base case in time $O(\log^2 n)$.
    Thus, the expected time is $O(m \log m \log n) + O(\sqrt{m} \log m \log^2 n)$.
\begin{theorem}\label{theorem:eppsteinframework}
    Given sets $S_1, \dots, S_m$ with $|S_i| = O(n)$, the diagrams \,$\fpvdmth(S_1), \dots, \allowbreak{} \fpvdmth(S_m)$ and their centroid decompositions, we can find \,$\sedmth(\bigcup_{i=1}^m S_i)$ in expected time $O(m \log m \log n + \sqrt{m} \log m \log^2 n)$, assuming the convex hull of every triplet of sets $(S_i, S_j, S_k)$ consists of $O(1)$ canonical sections.
\end{theorem}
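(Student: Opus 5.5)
We plan to instantiate Eppstein's randomized dynamic programming framework~\cite{eppsteinDynamicThreeDimensionalLinear1992} with the procedure \ref{alg:eppstein}, and to supply its two primitive operations, containment checks and base cases, through farthest-point Voronoi diagrams. First we randomly permute $S_1,\dots,S_m$. We then argue that $\EMD(1,1,m-2)$ returns $\sedmth(\bigcup_i S_i)$. This is by induction on $z$. The base case $z=1$ involves only three sets and is computed directly. For $z>1$, each branch of the procedure removes one set $S_t$ among the (at most three) sets currently singled out and recurses on the remainder. If $S_t$ is contained in the recursive disk, then that disk is also the smallest enclosing disk of the larger family, because adding points inside a smallest enclosing disk does not change it. If none of the three removed sets is covered, then by \autoref{lemma:boundaryPointCheck} each of them contains a boundary point of the full disk. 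In general position the full disk has at most three defining points, so it is determined by those three sets and equals $\sedmth(S_x\cup S_y\cup S_z)$ for the appropriate triple; this justifies the final return.

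Next we bound the cost of the primitives. A containment check $S_t\subseteq D$ amounts to locating the center of $D$ in $\fpvdmth(S_t)$ and comparing the distance to the farthest site against the radius of $D$. Using the centroid decomposition and \autoref{lemma:decompsearch}, with the oracle given by \autoref{obs:fpv-vertex-rays} (it decides in $O(1)$ which of the three regions contains the query point), this takes $O(\log n)$ time. A base case is a constant number of sets whose union's hull consists of $O(1)$ canonical sections by assumption, so \autoref{theorem:decompconstantsets} solves it in $O(\log^2 n)$. The canonical sections of a triple can themselves be obtained in $O(\log n)$ time by computing tangents between the three hulls, as in \autoref{lemma:preprocessing} with $m=3$.

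The cost analysis then follows Eppstein. We memoize the cells $(x,y,z)$ of the dynamic program. Using the fact that a randomly chosen set among those in a cell's subproblem is a boundary set with probability $O(1/z)$ (backwards analysis combined with \autoref{lemma:boundaryPointCheck}), Eppstein shows that the expected number of evaluated cells is $O(m\log m)$ and the expected number of base-case evaluations is $O(\sqrt m\log m)$. Each cell performs $O(1)$ containment checks at $O(\log n)$ each, and each base case costs $O(\log^2 n)$. Summing gives the claimed bound $O(m\log m\log n+\sqrt m\log m\log^2 n)$.

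The main obstacle is verifying that Eppstein's counting applies to our setting. His analysis is phrased for LP-type problems of combinatorial dimension three with a violation test. We therefore need smallest enclosing disk over sets to behave like an LP-type problem in which each set acts as a single constraint: monotonicity, locality, and the statement that at most three sets determine the optimum. Monotonicity and the basis size are clear. Locality is the delicate part, since \autoref{sub:An_attempt_at_Randomized_Incremental_Construction} shows that boundary-constrained variants break. We must be careful that \ref{alg:eppstein} only ever computes unconstrained disks $\sedmth$ of subfamilies and never any $\smd(\cdot,\mathcal R)$ variant. Under that restriction, locality follows from the uniqueness of the smallest enclosing disk, and the counterexample does not arise.
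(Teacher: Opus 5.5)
Your proposal is correct and follows the paper's argument. You run Eppstein's dynamic program (procedure DMD) on a random permutation of the sets, charge $O(\log n)$ per evaluated cell for containment tests (locating the disk's center in $\fpvdmth(S_t)$) and $O(\log^2 n)$ per base case via \autoref{theorem:decompconstantsets}, and multiply by Eppstein's expected counts of $O(m\log m)$ cells and $O(\sqrt{m}\log m)$ base cases. Your explicit induction showing DMD is correct (via \autoref{lemma:boundaryPointCheck} and the fact that there are at most three boundary points in general position), and your check that the set version satisfies locality, are sound details that the paper leaves implicit.
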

In the context of range-aggregate queries, the number of sets is $m = O(\log n)$ and the convex hull of every triplet has $O(1)$ canonical sections.
This leads to the following theorem.
\begin{theorem}
    Let $S$ be a set of points of size $n$ in the plane.
    With $O(n \log^2 n)$ preprocessing time and storage we can answer orthogonal range-aggregate queries for the smallest enclosing disk in expected time $O(\log^{\frac{5}{2}}n \log \log n)$.
\end{theorem}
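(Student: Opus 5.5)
The plan is to combine \autoref{lemma:preprocessing} with \autoref{theorem:eppsteinframework}, substituting $m = O(\log n)$. For preprocessing we keep the data structure of \autoref{lemma:preprocessing}. It is the modified range tree of~\cite{khareImprovedBoundsSmallest2014} whose secondary nodes store the convex hull, the farthest-point Voronoi diagram and its centroid decomposition of their canonical set. The hull and diagram are computed in linear time from the children, and the decomposition in linear time by~\cite{dellagiustinaNewLinearTimeAlgorithm2019}. Every point lies in $O(\log^2 n)$ secondary nodes, so time and storage are $O(n\log^2 n)$.

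For a query $Q$ we first obtain, in $O(\log^2 n)$ time, the $m = O(\log n)$ canonical sets $S_1,\dots,S_m$ with $\ch(S_1\cup\dots\cup S_m)=\ch(S\cap Q)$. This set system has the same smallest enclosing disk as $S\cap Q$, because the disk is determined by hull vertices. We then randomly permute the sets and run \ref{alg:eppstein}. By \autoref{theorem:eppsteinframework} the expected time is $O(m\log m\log n+\sqrt m\log m\log^2 n)$. With $m=O(\log n)$ and $\log m=O(\log\log n)$ this becomes $O(\log^2 n\log\log n+\log^{5/2}n\log\log n)=O(\log^{5/2}n\log\log n)$. This also dominates the $O(\log^2 n)$ cost of extracting the canonical sets.

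The main obstacle is the hypothesis of \autoref{theorem:eppsteinframework}: the convex hull of every triplet $S_i\cup S_j\cup S_k$ must consist of $O(1)$ canonical sections, and these sections must be computable quickly inside each base case. For the first part I would reuse the Davenport--Schinzel argument of \autoref{lemma:ch-bound}. The canonical sets of a query are separated by the range tree structure, so their hulls are pairwise disjoint convex polygons. By that lemma, any three of them yield a hull traversal of length at most $2\cdot3-1=5$, hence $O(1)$ sections. For the second part, the sections of a triplet are found by computing $O(1)$ common tangents between the three hulls, each by binary search in $O(\log n)$ time, as in~\cite{brassRangeAggregateQueriesGeometric2013}. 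This fits within the $O(\log^2 n)$ bound of \autoref{theorem:decompconstantsets} for each base case.

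A second point to check is disjointness. If some canonical sets were not separated, for example nested $x$-slabs from the primary and secondary levels, I would fall back on the fact that the sets $S_1,\dots,S_m$ are pairwise disjoint point sets, each contained in a distinct axis-aligned box. These boxes have pairwise disjoint interiors, and the convex hulls of the sets lie inside their boxes, so the hulls are disjoint. This gives the required separation, and the stated bound follows.
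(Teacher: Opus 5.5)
Your proposal is correct and follows the paper's own route. You take the $m=O(\log n)$ canonical sets from \autoref{lemma:preprocessing}, use the Davenport--Schinzel argument of \autoref{lemma:ch-bound} with three sets to get $O(1)$ canonical sections per triplet, and substitute into \autoref{theorem:eppsteinframework} to obtain $O(\log^{5/2} n\log\log n)$. Your extra details only make explicit what the paper leaves implicit: computing each triplet's sections with $O(1)$ tangent searches in $O(\log n)$ time inside the base case (rather than paying $O(\log^3 n)$ for all sections of $\ch(S\cap Q)$), and deriving hull disjointness from the disjoint bounding boxes of the canonical sets.
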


\bibliography{references}
\clearpage
\appendix

\section{Omitted Proofs}
\label{app:omitted_proofs}

\factSEDDefiningPoints*
\begin{proof}
    We first show that a disk $D$ covering all points with an antipodal pair or points defining a non-obtuse triangle on its boundary is minimal.
    Afterward we show that a disk $D$ is not minimal otherwise.
    
    Let $S$ be a set of points.
    Let $D$ be a disk with center $z$ covering all points with an antipodal pair or points defining a non-obtuse triangle on its boundary.
    With $D(p)$ we denote the area of points that are closer to $p \in S$ than $z$, \ie $D(p) = \{q \in \mathbb R^2 \mid d(p, q) < d(p, z) \}$, which is an open disk. 

    In case of an antipodal pair $\{p_a, p_b\}$ the center $z$ is their midpoint. Hence, $D(p_a) \cap D(p_b) = \emptyset$, \ie any other center is further away from $p_a$ or $p_b$.
    In case of a non-obtuse triangle let $p_a, p_b, p_c$ be the points forming the triangle.
    Note that $z$ is the circumcenter of $\triangle p_a p_b p_c$.
    If $\triangle p_a p_b p_c$ is a right triangle, then $z$ is the midpoint of two of the points $p_a, p_b, p_c$ forming an antipodal pair, and thus $D$ is minimal.
    If $\triangle p_a p_b p_c$ is no right triangle, it must be an acute triangle, and its circumcenter $z$ lies in the interior.
    Consider the line $\ell$ that passes through $z$ and $p_a$ and focus on $D(p_a) \cap D(p_b)$ and $D(p_a) \cap D(p_c)$.
    Since $z$ lies inside the triangle, $\ell$ separates $p_b$ and $p_c$.
    The intersection $D(p_a) \cap D(p_b)$ lies on the side of $\ell$ that contains $p_b$, the intersection $D(p_a) \cap D(p_c)$ lies on the side of $\ell$ that contains $p_c$.
    Hence, $D(p_a) \cap D(p_b) \cap D(p_c)$ is empty, \ie there exists no smaller disk covering $p_a, p_b, p_c$.

    Let $D$ be the disk covering all points, while having no points, no antipodal pair or no points forming a non-obtuse triangle on its boundary.
    Then all boundary points are contained in an arc of the boundary of less than 180°. By infinitesimally shifting $D$ towards the central point of the arc and infinitesimally shrinking $D$ we constructed a smaller disk covering all points, thus $D$ is not minimal. A contradiction.
\end{proof}
\end{document}